\DeclareMathOperator{\diag}{diag}
\DeclareMathOperator{\enc}{enc}
\DeclareMathOperator{\sech}{sech}
\newtheorem{theorem}{Theorem}
\newtheorem{problem}{Problem}
\newtheorem{remark}{Remark}
\newtheorem{definition}{Definition}
\newtheorem{example}{Example}
\begin{document}

\title{Soft Demapping of Spherical Codes from \\Cartesian Powers of PAM Constellations}

\author{Reza~Rafie~Borujeny~\IEEEmembership{Member,~IEEE}, Susanna~E.~Rumsey~\IEEEmembership{Graduate~Student~Member,~IEEE}, Stark~C.~Draper~\IEEEmembership{Senior~Member,~IEEE}, Frank~R.~Kschischang~\IEEEmembership{Fellow,~IEEE}%
\thanks{The authors are with the Edward S. Rogers Sr. Department of Electrical and Computer Engineering, University of Toronto, Toronto, ON M5S 3G4, Canada (e-mail: rrafie@ece.utoronto.ca, s.rumsey@mail.utoronto.ca, stark.draper@utoronto.ca, frank@ece.utoronto.ca).}}

\maketitle

\begin{abstract}
For applications in concatenated coding for optical communications systems, we examine soft-demapping of short spherical codes constructed as constant-energy shells of the Cartesian power of pulse amplitude modulation constellations.  These are unions of permutation codes having the same average power.  We construct a list decoder for permutation codes by adapting Murty's algorithm, which is then used to determine mutual information curves for these permutation codes. In the process, we discover a straightforward expression for determining the likelihood of large subcodes of permutation codes. We refer to these subcodes, obtained by all possible sign flips of a given permutation codeword, as orbits. We introduce a simple process, which we call orbit demapping with frozen symbols, that allows us to extract soft information from noisy permutation codewords. In a sample communication system with probabilistic amplitude shaping protected by a standard low-density parity-check code that employs short permutation codes, we demonstrate that orbit demapping with frozen symbols provides a gain of about $0.3$~dB in signal-to-noise ratio compared to the traditional symbol-by-symbol demapping. By using spherical codes composed of unions of permutation codes, we can increase the input entropy compared to using permutation codes alone. In one scheme, we consider a union of a small number of permutation codes. In this case, orbit demapping with frozen symbols provides about $0.2$~dB gain compared to the traditional method. In another scheme, we use all possible permutations to form a spherical code that exhibits a computationally feasible trellis representation. The soft information obtained using the BCJR algorithm outperforms the traditional symbol-by-symbol method by $0.1$~dB. Overall, using the spherical codes containing all possible permutation codes of the same average power and the BCJR algorithm, a gain of $0.5$~dB is observed compared with the case of using one permutation code with the symbol-by-symbol demapping. Comparison of the achievable information rates of bit-metric decoding verifies the observed gains.
\end{abstract}

\begin{IEEEkeywords}
    Probabilistic amplitude shaping, spherical codes, permutation codes, trellis, soft-out demapping.
\end{IEEEkeywords}

\section{Introduction}
\IEEEPARstart{R}{ecent} advancements in optical fiber communication have highlighted the potential of employing short spherical codes for modulation, leading to promising nonlinear gains~\cite{rafie_2023_why}. These spherical codes, constructed from Cartesian powers of pulse amplitude modulation (PAM) constellations, are essentially unions of permutation codes~\cite{slepian_1965_permutation}. Permutation codes have already found their application in probabilistic amplitude shaping (PAS) within the realm of fiber optic communication~\cite{bocherer_2015_bandwidth}. This paper aims to delve deeper into the study of such unions of permutation codes, exploring their potential benefits and implications in the context of optical fiber communications.

Permutation codes manifest three intriguing properties. Firstly, all codewords possess the same energy. This characteristic has been identified as the cause for their reduced cross-phase modulation (XPM) and enhanced received signal-to-noise ratio (SNR) when deployed over a nonlinear optical fiber~\cite{rafie_2023_why}, serving as the primary motivation for this work.

Secondly, all codewords, when considering the amplitude of their elements, exhibit the same type. Consequently, they can be utilized to approximate a probability mass function on their constituent constellation (i.e., the alphabet from which the elements of the initial vector are selected). This capability enables their use in distribution matching~\cite{bocherer_2016_constant}, a key property implicitly fundamental to the development of PAS~\cite{bocherer_2015_bandwidth}. 

Thirdly, minimum Euclidean distance decoding of permutation codes is notably straightforward~\cite{slepian_1965_permutation}, with a decoding complexity not worse than that of sorting.

The performance of permutation codes has been extensively studied in terms of their distance properties and error probability under maximum likelihood detection when transmitted over an additive white Gaussian noise (AWGN) channel (see~\cite{ingemarsson_1989_group} and references therein). The performance of permutation codes in terms of mutual information in an AWGN channel has remained largely elusive until recent advancements. In particular, the work of~\cite{bocherer_2015_bandwidth} suggests that, given reasonable protection with a forward error-correcting code, subsets of permutation codes with large blocklength can operate remarkably close to the Shannon limit.

To circumvent the inherent rate loss at short blocklengths~\cite{gultekin_2020_probabilistic}, permutation codes of sufficiently large blocklength should be employed for PAS~\cite{bocherer_2023_probabilistic}. Conversely, to leverage their enhanced SNR in nonlinear fiber, a smaller blocklength is preferred~\cite{rafie_2023_why}. This presents a trade-off between the blocklength and the linear/nonlinear gains, which is an area of focus in this work. Our primary interest lies in the application of permutation codes, and generalizations thereof, as inner multi-dimensional modulation formats in a concatenated scheme with an outer error-correcting code\footnote{In the reverse concatenation scheme used in PAS, the meanings of ``inner'' and ``outer'' should be interpreted from the point of view of the receiver.}. We consider the PAS~\cite{bocherer_2015_bandwidth} setup to demonstrate our findings, as detailed in Section~\ref{sec:background}. The novelty of our work lies in our choice of signal set for the inner modulation and in the soft demapping of the inner modulation.

We will focus on permutation codes with a PAM constituent constellation. This maintains standard transmitter designs, making our codes compatible with existing systems. As mentioned, the performance of permutation codes in terms of mutual information has not been extensively studied in the literature. This is chiefly due to the very large size of these codes, which makes conventional computation of mutual information practically impossible. In Sections~\ref{sec:mivssnr} and~\ref{sec:softdecode}, different types of achievable rates of permutation codes are discussed. In particular, using a novel list decoder for permutation codes constructed as a special case of Murty's algorithm~\cite{murty_1968_algorithm}, we develop a methodology to obtain mutual information curves of permutation codes for the AWGN channel---a task that was otherwise infeasible using conventional methods. These curves, which are counterparts of the mutual information curves of~\cite{ungerboeck_1982_channel} for PAM constellations, characterize the mutual information as a function of SNR for permutation codes when, at the transmitter, codewords are chosen uniformly at random. Using these curves, we study the shaping performance of permutation codes as a function of their blocklength and find the range of blocklengths for which a reasonable shaping gain can be expected. 

The soft-demapping of permutation codes (or the more general spherical codes we will study) has received little attention in the literature. One common approach to obtaining soft information is to assume that the elements of each codeword are independent~\cite{bocherer_2015_bandwidth}. This significantly reduces the computational complexity of calculating the soft information, as it can be computed in a symbol-by-symbol fashion. The type of the permutation code contributes to the calculation of the soft information through the prior distribution. While the soft information obtained using this assumption becomes more accurate as the blocklength goes to infinity, for the small blocklengths of interest in nonlinear fiber, the symbol-by-symbol soft-demapping approach of~\cite{bocherer_2015_bandwidth} is far from accurate~\cite{schulte_2020_joint, luo_2023_joint}. One key result of this work is a simple and computationally efficient method, which is described in Section~\ref{sec:softdecode}, to obtain soft information from received permutation codewords. 

Finding the largest permutation code for a given average energy and a fixed constituent constellation is a finite-dimensional counterpart of finding the maximizer of input entropy with an average power constraint~\cite{kschischang_1993_optimal}. By removing the constraint on having a permutation code (i.e., having only one type), the problem of finding the largest spherical code for a given average energy and a fixed constituent constellation is more relevant at finite blocklengths. This will allow us to further increase the input entropy at shorter blocklengths of interest in communication over nonlinear optical fiber while preserving the constant-energy property. Most importantly, in Section~\ref{sec:shell}, we show that such codes exhibit a trellis structure with reasonable trellis complexity, allowing us to obtain soft information from the received spherical codewords. Moreover, the achievable rates of permutation codes under bit-metric decoding are numerically evaluated using various methods for estimating soft information. The soft-in soft-out structure of our scheme makes it an appealing choice for fiber-optic communications systems, as it allows for short constant-energy spherical codes with reasonable shaping gain and tolerance to nonlinear impairments caused by XPM. 

Throughout this paper, we adopt the following notational conventions to ensure clarity and consistency.
The set of real numbers is denoted as $\mathbb{R}$.
A normal lowercase italicized font is used for variables, samples of a random variable and functions with one-letter names, as in $x, y, f(x)$ or $\alpha(x)$.
A bold lowercase italicized font is used for vectors, which are always treated as column vectors, as in $\bm{x} = (x_1,x_2, \dots, x_n)$.
A normal capital italicized font is used for random variables and graphs, as in $X, Y, Z$ and $G$. The only exception to this rule is $O$ which is used as Landau's big-O symbol.
A bold capital italicized font is used for random vectors, as in $\bm{X}$.
A sans-serif capital font is used for matrices, as in $\mathsf{P}$.
A capital calligraphy font is used for sets and groups, as in $\mathcal{A}, \mathcal{S}$ and $\mathcal{C}$.
A monospace font is used for energy (squared euclidean norm) of vectors as in $\mathtt{E}$ or $\mathtt{E}_{\bm{y}}$.

\section{Background}\label{sec:background}
This section contains an overview of some of the most important concepts used in this paper including a brief description of permutation codes and a general overview of PAS. We also review the assignment problem, a fundamental combinatorial optimization problem which will play a central role in the development of list decoders for permutation codes.

\subsection{Permutation Codes}
We denote the set of the first $n$ positive integers by $[n]$. A \emph{permutation} of length $n$ is a bijection from $[n]$ to itself. The set of all permutations of length $n$ forms a group under function composition. This group is called the \emph{symmetric group} on $[n]$ and is denoted by $\mathcal{S}_n$. 

There are different ways to represent a permutation $\alpha\in \mathcal{S}_n$. In the \emph{two-line form}, the permutation $\alpha$ is given as
\[
\phantom{.}\begin{bmatrix}
1 & 2 & 3 & \hdots & n\\
\alpha(1) & \alpha(2) & \alpha(3) & \hdots & \alpha(n)
\end{bmatrix}.
\]
The \emph{matrix form} of $\alpha$ is given as
\[
\mathsf{P}_\alpha = \begin{bmatrix}
\bm{e}_{\alpha(1)} & \bm{e}_{\alpha(2)} & \bm{e}_{\alpha(3)} & \hdots & \bm{e}_{\alpha(n)}
\end{bmatrix}^\top
\]
where $\bm{e}_{j}$ denotes a column vector of length $n$ with $1$ in the $j^{\text{th}}$ position and $0$ in every other position. 
The set of permutation matrices $\{\mathsf{P}_\alpha \mid \alpha \in \mathcal{S}_n\}$ forms a group under matrix multiplication and is usually denoted as $\mathcal{A}_{n-1}$. This group is, of course, isomorphic to $\mathcal{S}_n$. A \emph{signed permutation matrix} is a generalized permutation matrix whose nonzero entries are $\pm 1$. The set of all $n\times n$ signed permutation matrices forms a group under matrix multiplication and is usually denoted as $\mathcal{B}_n$.

A \emph{Variant~I} permutation code $\mathcal{C}$ of length $n$~\cite{slepian_1965_permutation} is characterized as the set of points in real Euclidean $n$-space $\mathbb{R}^n$ that are generated by the group $\mathcal{A}_{n-1}$ acting on a vector $\bm{x}$:
\[
\phantom{.}\mathcal{C} = \{\mathsf{P}\bm{x} \mid \mathsf{P}\in \mathcal{A}_{n-1}\}.
\]
This means that the code $\mathcal{C}$ is the set of all distinct vectors that can be formed by permuting the order of the $n$ elements of a vector $\bm{x}$. Each vector in $\mathcal{C}$ is called a codeword.
Likewise, a \emph{Variant~II} permutation code $\mathcal{C}$ of length $n$ is characterized as the set of points in $\mathbb{R}^n$ that are generated by the group $\mathcal{B}_n$ acting on a vector $\bm{x}$:
\[
\phantom{.}\mathcal{C} = \{\mathsf{P}\bm{x} \mid \mathsf{P}\in \mathcal{B}_{n}\}.
\]
Here, the code $\mathcal{C}$ is the set of all distinct vectors that can be formed by permuting the order or changing the sign of the $n$ elements of a vector $\bm{x}$.

The \emph{initial vector} of a Variant~I or II permutation code is the codeword 
\[
\bm{x} = (x_1, x_2, x_3, \dots, x_n)
\]
that satisfies $0 < x_1\leq x_2 \leq x_3 \leq \dots \leq x_n$. The left most inequality is strict to be consistent with the use of a PAM constituent constellation.

\begin{figure}
\centering
\includegraphics[width=\columnwidth]{./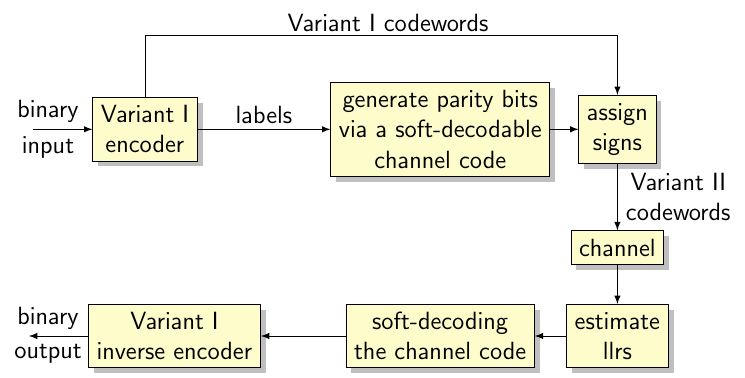}
\caption{Block diagram of the communication system that we consider.}
\label{fig:pas}
\end{figure}

\subsection{Probabilistic Amplitude Shaping}
\label{sec:PAS}
We consider the communication system of PAS as described in~\cite{bocherer_2015_bandwidth}. As shown in Fig.~\ref{fig:pas}, at the first stage in the transmitter, a sequence of binary information inputs is mapped to a Variant~I permutation code whose initial vector has an empirical distribution that resembles a desired input distribution. This first step is called \emph{permutation encoding} and is described further in Section~\ref{sec:softdecode}. Each symbol in the constituent constellation is \emph{labeled} according to some Gray labeling of the used PAM. Therefore, corresponding to each Variant~I codeword, there is a \emph{binary label} obtained by concatenating the Gray label of each of the elements in the codeword. If the constituent constellation\footnote{We assume that the constituent constellation has a power-of-two size.} is $\mathcal{M}$, the length of the label of each Variant~I codeword of blocklength $n$ is exactly $n\log_2\lvert \mathcal{M}\rvert$. The binary label of a codeword should not be confused with the information bits at the input of the permutation encoder that produced the codeword. Each Variant~I permutation codeword is the output of the permutation encoder applied to a sequence of typically less than $n\log_2\lvert \mathcal{M}\rvert$ input information bits. 

To protect the labels, a soft-decodable channel code is used. We emphasize that instead of encoding the input information bits that was used by the permutation encoder, the codeword labels are encoded by the channel code. This is fundamental in the development of PAS that uses bit-interleaved coded modulation as it allows the permutation codewords to have a semi-Gray labeling: if codewords are close in Euclidean distance, their labels are also close in the Hamming distance. If the input information bits are used to label the codewords, a semi-Gray labeling, even for small blocklengths, appears to be challenging to find. The channel encoder produces a sequence of parity bits which is used to assign signs to the elements of the Variant~I permutation codeword\footnote{Sometimes, some binary inputs are directly used as sign bits of some of the elements of the permutation codeword. This is done when the ratio of the blocklength of the permutation code to the length of its binary label, $1/\log_2\lvert\mathcal{M}\rvert$, is larger than the overhead of the channel code. See~\cite{bocherer_2015_bandwidth} for details.}. This will produce a Variant~II permutation codeword which will be transmitted over an AWGN channel with noise variance $\sigma^2$. 

A noisy version of each transmitted Variant~II permutation codeword is received at the receiver. By \emph{demapping}, we mean the detection of the binary label and sign bits of a permutation codeword from its received noisy version. The first block at the receiver estimates the log-likelihood ratio (LLR) for each bit in the label of the received word---a process we refer to as \emph{soft demapping}. The estimated LLRs are then passed to the decoder of the channel code to approximate the label of the transmitted Variant~I permutation codeword at the receiver. If an acceptable Variant~I codeword is obtained, the inverse of permutation encoder produces the corresponding binary input sequence. Conversely, if the reproduced symbols do not form a valid Variant~I permutation codeword, a block error is declared.

\subsection{The Assignment Problem}\label{subsec:assignmentproblem}
A bipartite graph $G = (\mathcal{U},\mathcal{V},\mathcal{E})$ is a triple consisting of two disjoint set of vertices $\mathcal{U}$ and $\mathcal{V}$ and a set of edges $\mathcal{E}\subseteq \mathcal{U}\times \mathcal{V}$. Each edge in $\mathcal{E}$ connects a vertex in $\mathcal{U}$ to a vertex in $\mathcal{V}$. The graph $G$ is called balanced if $\lvert \mathcal{U}\rvert = \lvert \mathcal{V}\rvert = n$. The bipartite graph $G$ is complete if $\mathcal{E} = \mathcal{U}\times \mathcal{V}$. All bipartite graphs that we consider are assumed complete and balanced. The graph $G$ is weighted if there exist a weight function $w:\mathcal{E}\to\mathbb{R}\cup\{-\infty\}$. A matching is a set of edges without common vertices. The weight of a matching is the sum of the weight of the edges in that matching. A perfect matching is a matching that matches all vertices of the graph. The maximum weight perfect matching problem is the problem of finding a perfect matching with maximum weight. This problem is usually called the assignment problem and is stated as follows. 
\begin{problem}[Assignment Problem]
Find a bijection $f:\mathcal{U}\to \mathcal{V}$ such that the reward function 
\[
\sum_{a\in \mathcal{U}}w(a,f(a))
\]
is maximized.
\end{problem}

The assignment problem is typically solved using the Hungarian method~\cite{burkard_2012_assignment, kuhn_1955_hungarian}. The time complexity of the Hungarian algorithm is $O\left(n^3\right)$. However, the weight function may have additional structure that can be exploited to reduce the complexity of solving the assignment problem. 

A weight function $w$ is said to be \emph{multiplicative} if there exist two functions
\begin{align*}
w_{\mathcal{U}}:\mathcal{U}\to\mathbb{R}\cup\{-\infty\},\\
w_{\mathcal{V}}:\mathcal{V}\to\mathbb{R}\cup\{-\infty\},
\end{align*}
such that 
\[
\phantom{.}\forall (a,b) \in \mathcal{U}\times \mathcal{V} : w(a,b)=w_{\mathcal{U}}(a)w_{\mathcal{V}}(b).
\]
If the weight function is multiplicative the assignment problem can be solved using sorting as follows: Assume
\[
\phantom{,}\mathcal{U} = \{a_i\mid i\in [n]\},\quad \mathcal{V} = \{b_i\mid i\in [n]\},
\]
and that there are two permutations $\sigma_{\mathcal{U}}$ and $\sigma_{\mathcal{V}}$ on $[n]$ such that
\begin{align*}
w_{\mathcal{U}}(a_{\sigma_{\mathcal{U}}(1)})\leq w_{\mathcal{U}}(a_{\sigma_{\mathcal{U}}(2)}) \leq \dots \leq w_{\mathcal{U}}(a_{\sigma_{\mathcal{U}}(n)}),\\
w_{\mathcal{V}}(b_{\sigma_{\mathcal{V}}(1)})\leq w_{\mathcal{V}}(b_{\sigma_{\mathcal{V}}(2)}) \leq \dots \leq w_{\mathcal{V}}(b_{\sigma_{\mathcal{V}}(n)}).
\end{align*}
In this case, the assignment problem can be immediately solved using the upper bound in the rearrangement inequality~\cite{hardy_1988_inequalities}.
The bijection $f$ that solves the assignment problem is defined by
\[
\phantom{.}f(a_i) = b_{\sigma_{\mathcal{V}}\sigma_{\mathcal{U}}^{-1}(i)}.
\]
The solution can be found by sorting in $O(n\log n)$ time.

\section{Permutation Codes as Multidimensional Constellations}\label{sec:mivssnr}
In this section, we discuss how permutation codes can be used as multidimensional constellations. List decoding of permutation codes is formulated as an assignment problem and proper weight functions for permutation codes are defined. In particular, a novel weight function for Variant~II permutation codes is introduced that allows for efficient \emph{orbit list decoding} of Variant~II permutation codes. Orbit decoding is used in the rest of the paper in order to obtain soft information when demapping Variant~II permutation codes.
In the Appendix, using the definitions of this section, we present a methodology to obtain mutual information vs. SNR curves for permutation codes. Essential to the development of our method is a specialized version of Murty's algorithm which is described in the Appendix. This section concludes with example mutual information curves which show that considerable shaping gain is obtained by using permutation codes with blocklengths as small as $n = 50$.

\subsection{List Decoding of Variant~I Permutation Codes}
We consider a vector Gaussian noise channel 
\[
\bm{Y} = \bm{X} + \sigma \bm{Z}
\]
where the input $\bm{X}$ is a random vector uniformly distributed over a permutation code $\mathcal{C}$, the noise term $\bm{Z}$ is a standard normal random vector and $\sigma$ is the standard deviation of the additive white Gaussian noise. The output $\bm{Y}$ is a noisy version of $\bm{X}$.
The joint distribution\footnote{Instead of distinguishing between discrete, continuous or mixed random vectors, we always refer to $p$ as a distribution for convenience.} of the input and the output is denoted by $p(\bm{x},\bm{y})$. Similarly, $p(\bm{y}\mid \bm{x})$ is the conditional distribution of the output conditional on the input and $p(\bm{y})$ is the output distribution. List decoding of a Variant~I permutation code is to find a set $\mathcal{L} = \{\bm{c}_0, \bm{c}_1, \dots, \bm{c}_{L-1}\}$ of codewords with the highest likelihoods so that
\[
\phantom{,}p(\bm{y}\mid \bm{c}_0) \geq p(\bm{y}\mid \bm{c}_1) \geq \dots \geq p(\bm{y}\mid \bm{c}_{L-1}),
\]
where $L = \lvert\mathcal{L}\rvert$ is the \emph{list size}.

To be able to find such a list of candidate codewords, we need to define a weight function for the maximum likelihood decoding of permutation codes. Because the noise is Gaussian, a codeword $\bm{x}$ has a higher likelihood than a codeword $\bm{x}'$ if $\bm{x}$ is closer to the received word $\bm{y}$ than $\bm{x}'$. That is,
\[
\phantom{.}p(\bm{y}\mid \bm{x}) \geq p(\bm{y}\mid \bm{x}') \iff \sum_{i=1}^n(y_i-x_i)^2 \leq \sum_{i=1}^n(y_i-x_i')^2. 
\]
Using the constant-energy property of permutation codes, we can simplify further to get
\[
\phantom{.}p(\bm{y}\mid \bm{x}) \geq p(\bm{y}\mid \bm{x}') \iff \sum_{i=1}^n y_ix_i \geq \sum_{i=1}^ny_ix_i'.
\]
Therefore, list decoding with maximum likelihood reward is equivalent to list decoding with maximum correlation reward. For Variant~I permutation codes, we can define the following assignment problem:

\begin{problem}[Maximum Correlation Decoding]\label{prob:assignment_correlation}
Find a bijection $f:[n]\to [n]$ such that the reward function
$
\sum_{i=1}^n x_i y_{f(i)}
$
is maximized.
\end{problem}
This is an assignment problem with
\[
\phantom{,}\mathcal{U} = \{x_i\mid i\in[n]\},\quad \mathcal{V} = \{y_i\mid i\in[n]\},
\]
and weight function
\[
\phantom{.}w(x_i,y_j) = x_iy_j.
\]
The weight function is clearly multiplicative with $w_{\mathcal{U}}(x_i) = x_i$ and $w_{\mathcal{V}}(y_j) = y_j$. The list decoding problem, then, can be solved by a direct application of Murty's algorithm as described in the Appendix. 

The assignment problem defined for Variant~I permutation codes is not suitable for Variant~II permutation codes as in a Variant~II code, each symbol can have a plus or minus sign. It is not possible to accommodate such arbitrary sign flips in the assignment problem defined for Variant~I permutation codes. In what follows, we introduce maximum orbit-likelihood decoding of Variant~II permutation codes that allows us to reformulate the decoding problem in the framework of an assignment problem.

\subsection{Orbit Decoding of Variant~II Permutation Codes}
A trivial subcode of a Variant~II permutation code $\mathcal{C}$ corresponding to a codeword $\bm{x}\in \mathcal{C}$ is the subcode containing $\bm{x}$ and all codewords obtained by changing the sign of elements in $\bm{x}$ in all possible ways. We denote the group of $n$-dimensional diagonal unitary matrices (i.e., $n \times n$ diagonal integer matrices with $\pm 1$ diagonal entries) as $\mathcal{D}_n$. A trivial subcode corresponding to $\bm{x}$ is the orbit of $\bm{x}$ under the group action $\phi$ defined by
\begin{align}
\phi:\mathcal{D}_n\times \mathcal{C} &\rightarrow \mathcal{C},\nonumber\\
(\mathsf{A}, \bm{x}) &\mapsto \mathsf{A}\bm{x}.\nonumber
\end{align}
Similar to finding the codeword in a Variant~I permutation code with the maximum likelihood, we can think about finding the orbit in a Variant~II permutation code with the maximum orbit likelihood---a process that we call maximum orbit-likelihood decoding. Also, similar to a list of $L$ most likely codewords for a Variant~I permutation code, we can think about finding a list of $L$ most likely orbits for a Variant~II permutation code---a process that we refer to as orbit list-decoding. In order to use Murty's algorithm, a weight function for the orbits needs to be defined. We denote the likelihood of the orbit of $\bm{x}\in \mathcal{C}$ by
\[
\phantom{.}f_{\mathrm{o}}(\bm{y}\mid \bm{x})\coloneqq \frac{\sum_{\mathsf{A}\in \mathcal{D}_n} p(\bm{y}\mid \mathsf{A}\bm{x})p(\mathsf{A}\bm{x})}{\sum_{\mathsf{A}\in \mathcal{D}_n} p(\mathsf{A}\bm{x})} = \!\!\!\!\sum_{\mathsf{A}\in \mathcal{D}_n}\!\!\!\!\frac{p(\bm{y}\mid \mathsf{A}\bm{x})}{2^n}.
\]
The key result of this section is the following theorem.
\begin{figure}
\centering
\includegraphics[width=\columnwidth]{./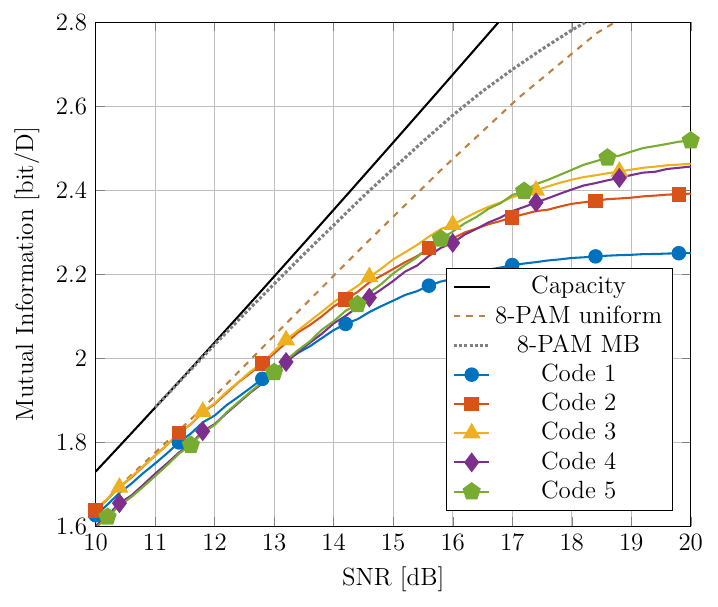}
\caption{Mutual information curves for five codes with $n=12$.}
\label{fig:mi11}
\end{figure}
\begin{theorem}\label{thm:orbitlikelihood}
The likelihood of the orbit of $\bm{x}\in \mathcal{C}$ under the action of $\phi$ is given by
\begin{equation}\label{eq:orbitlikelihood}
\phantom{,}f_{\mathrm{o}}(\bm{y}\mid \bm{x}) = a(\mathtt{E}_{\bm{y}}, \mathtt{E}, \sigma, n)\prod_{i=1}^n\cosh\frac{y_ix_i}{\sigma^2},
\end{equation}
where $a(\mathtt{E}_{\bm{y}}, \mathtt{E}, \sigma, n)$ is a constant that depends on the energy of the received word $\mathtt{E}_{\bm{y}}$, the energy $\mathtt{E}$ of each permutation codeword, the noise standard deviation $\sigma$ and the blocklength $n$.
\end{theorem}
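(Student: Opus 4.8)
The plan is a direct calculation from the Gaussian channel law. First I would write the conditional density explicitly: since $Y = X + \sigma Z$ with $Z$ standard normal, for any codeword $c$,
\[
p(y\mid c) = (2\pi\sigma^2)^{-n/2}\exp\!\Big(-\tfrac{1}{2\sigma^2}\textstyle\sum_{i=1}^n (y_i-c_i)^2\Big).
\]
Expanding $\sum_i(y_i-c_i)^2 = \sum_i y_i^2 - 2\sum_i y_ic_i + \sum_i c_i^2$ and setting $E_y = \sum_i y_i^2$, $E = \sum_i c_i^2$, this becomes $p(y\mid c) = (2\pi\sigma^2)^{-n/2}\exp(-(E_y+E)/(2\sigma^2))\exp(\sigma^{-2}\sum_i y_ic_i)$.

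Next I would specialize to $c = Ax$ with $A = \diag(\epsilon_1,\dots,\epsilon_n)\in\mathit{DU}(n)$, $\epsilon_i\in\{-1,+1\}$. Two facts combine here. First, $\sum_i (Ax)_i^2 = \sum_i x_i^2 = E$ is independent of $A$, so the prefactor $(2\pi\sigma^2)^{-n/2}\exp(-(E_y+E)/(2\sigma^2))$ pulls out of the sum over $A$. Second, since $X$ is uniform on $C$ the priors $p(Ax)$ are all equal and cancel, so the definition of $f_{\mathrm{o}}$ reduces to $2^{-n}\sum_{A}p(y\mid Ax)$. Since $\sum_i(Ax)_i y_i = \sum_i \epsilon_i x_i y_i$, what remains inside is
\[
2^{-n}\!\!\!\sum_{\epsilon\in\{-1,+1\}^n}\!\!\!\exp\!\Big(\tfrac{1}{\sigma^2}\textstyle\sum_{i=1}^n\epsilon_i x_i y_i\Big).
\]

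The last step is to observe that this sum factors over coordinates:
\[
\sum_{\epsilon\in\{-1,+1\}^n}\prod_{i=1}^n e^{\epsilon_i x_iy_i/\sigma^2} = \prod_{i=1}^n\big(e^{x_iy_i/\sigma^2}+e^{-x_iy_i/\sigma^2}\big) = \prod_{i=1}^n 2\cosh\frac{x_iy_i}{\sigma^2},
\]
and the $2^n$ produced here cancels the $2^{-n}$ normalization. Collecting terms yields \eqref{eq:orbitlikelihood} with $a(E_y,E,\sigma,n) = (2\pi\sigma^2)^{-n/2}\exp(-(E_y+E)/(2\sigma^2))$.

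There is no genuine obstacle; the computation is elementary. The one point deserving a remark is a codeword $x$ with some $x_i = 0$, whose orbit then has fewer than $2^n$ distinct elements — but the definition of $f_{\mathrm{o}}$ sums over all $2^n$ group elements with multiplicity and $\cosh 0 = 1$, so the displayed identity is unaffected. What is worth emphasizing is conceptual: it is precisely the constant-energy property of Variant~II codes that makes the $A$-dependence collapse onto the correlation term, and summing $e^{\pm x_iy_i/\sigma^2}$ over the full sign group $\mathit{DU}(n)$ is exactly what the coordinatewise hyperbolic cosine records.
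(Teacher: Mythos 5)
Your proof is correct and follows essentially the same route as the paper's: both reduce $f_{\mathrm{o}}$ to $2^{-n}\sum_{A}p(y\mid Ax)$, exploit the factorization of the sum over sign vectors across coordinates, and arrive at the same constant $a(E_y,E,\sigma,n)=(2\pi\sigma^2)^{-n/2}e^{-(E_y+E)/(2\sigma^2)}$. The only (harmless) difference is that you expand the square before factoring while the paper factors first; your remark about coordinates with $x_i=0$ is a nice extra observation but not needed.
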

\begin{proof}
\begin{align*}
f_{\mathrm{o}}(\bm{y}\mid \bm{x}) &= \sum_{\mathsf{A}\in \mathcal{D}_n}\frac{p(\bm{y}\mid \mathsf{A}\bm{x})}{2^n}\\
&= \frac{1}{\left(2\sqrt{2\pi}\sigma\right)^n}\sum_{s_1=\pm1}\sum_{s_2=\pm1}\!\dots\!\sum_{s_n=\pm1}\prod_{i=1}^ne^{-\frac{(y_i - s_ix_i)^2}{2\sigma^2}}\\
&= \frac{1}{\left(2\sqrt{2\pi}\sigma\right)^n}\prod_{i=1}^n\sum_{s_i=\pm1}e^{-\frac{(y_i - s_ix_i)^2}{2\sigma^2}}\\
&=\frac{1}{\left(2\sqrt{2\pi}\sigma\right)^n}\prod_{i=1}^n\left(e^{-\frac{(y_i - x_i)^2}{2\sigma^2}} + e^{-\frac{(y_i + x_i)^2}{2\sigma^2}}\right)\\
&=\frac{1}{\left(2\sqrt{2\pi}\sigma\right)^n}\prod_{i=1}^ne^{-\frac{y_i^2 + x_i^2}{2\sigma^2}}\left(e^{\frac{y_ix_i}{\sigma^2}}+e^{-\frac{y_ix_i}{\sigma^2}}\right)\\
&=\frac{1}{\left(2\sqrt{2\pi}\sigma\right)^n}\prod_{i=1}^n2e^{-\frac{y_i^2 + x_i^2}{2\sigma^2}}\cosh\frac{y_ix_i}{\sigma^2}\\
&=\frac{e^{-\frac{\mathtt{E}_{\bm{y}}+\mathtt{E}}{2\sigma^2}}}{\left(\sqrt{2\pi}\sigma\right)^n}\prod_{i=1}^n\cosh\frac{y_ix_i}{\sigma^2}.
\end{align*}
\end{proof}
Theorem~\ref{thm:orbitlikelihood} suggests the weight function
\[w(x_i,y_j) = \log\cosh\frac{y_jx_i}{\sigma^2}\]
for Murty's algorithm in orbit list-decoding of Variant~II permutation codes. This weight function, however, is not multiplicative. Using a generalized version of the rearrangement inequality~\cite{holstermann_2017_generalization}, it turns out that this weight function has all the required properties of a multiplicative weight function that we used to simplify Murty's algorithm in the Appendix.

Mutual information of Variant~II permutation codes can be computed using the orbit likelihoods by the method discussed in the Appendix. Note that the contribution of the likelihood of one orbit in $p(\bm{y})$ in~(\ref{eq:approxpy}) is equivalent to the contribution of all $2^n$ codewords in the orbit. 
\begin{example}\label{ex:mi11}
As an example, we find the mutual information vs. SNR for five codes with blocklength $n=12$ with the following initial vectors:
\begin{itemize}
    \item for Code 1, $(1, 1, 1, 1, 1, 3, 3, 3, 3, 3, 5, 7)$,
    \item for Code 2, $(1, 1, 1, 1, 1, 3, 3, 3, 5, 5, 5, 7)$,
    \item for Code 3, $(1, 1, 1, 1, 3, 3, 3, 3, 5, 5, 7, 7)$,
    \item for Code 4, $(1, 1, 1, 1, 3, 3, 5, 5, 5, 5, 7, 7)$,
    \item for Code 5, $(1, 1, 1, 3, 3, 3, 5, 5, 5, 7, 7, 7)$.
\end{itemize}
The average symbol power of each of the codes is $10.33$, $13$, $15.66$, $18.33$ and $21$, respectively, and the initial vectors have been chosen to maximize the code size given these power constraints. Note that the constituent constellation is an $8$-PAM. The average power of this constituent constellation with uniform input distribution is $21$. In Fig.~\ref{fig:mi11}, we illustrate the mutual information curves for these five codes as well as the constituent constellation under uniform input distribution. Also shown is the mutual information of the constituent constellation with Maxwell--Boltzman (MB) distributions which are known to be entropy maximizers~\cite{kschischang_1993_optimal} and close to optimal, in terms of mutual information, for the AWGN channel~\cite{delsad_2023_probabilistic}. The channel capacity is also depicted. It is obvious that none of the five codes show any shaping gain as all mutual information curves are below that of the $8$-PAM with uniform distribution.
\end{example}
\begin{figure}
\centering
\includegraphics[width=\columnwidth]{./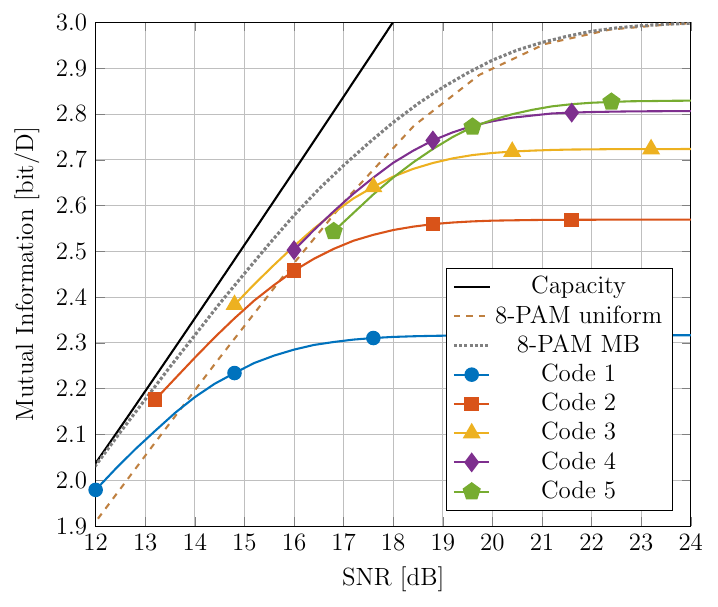}
\caption{Mutual information curves for five codes with $n=50$.}
\label{fig:mi12}
\end{figure}
\begin{example}\label{ex:mi12}
As another example, we find the mutual information vs. SNR for five codes with blocklength $n=50$. The initial vectors are chosen so that the largest Variant~II codes with the average power of $7.08$, $10.6$, $14.12$, $17.64$ and $21.16$, are chosen. The list size for Murty's algorithm was set to $5^6$. The number of Monte Carlo samples for each SNR is $100,000$. However, no numerically significant difference was observed even with $10,000$ samples. In Fig.~\ref{fig:mi12}, we illustrate the mutual information curves for these five codes as well as the constituent constellation under uniform and MB input distribution. It is evident that Code 1, Code 2 and Code 3 show considerable shaping gains as their curves cross that of the $8$-PAM with uniform distribution. For lower SNR values, one needs to have larger list sizes to get an accurate estimate for $p(\bm{y})$. As a result, the computation of curves for lower SNRs is slower. We did not generate the curves for lower SNRs as we were mostly interested in the crossing of the permutation codes' mutual information curves and the mutual information curve of the $8$-PAM with uniform distribution. 
\end{example}

\begin{remark}
As shown in Fig.~\ref{fig:mi12}, a short blocklength of $n=50$ is enough to have considerable shaping gain over the AWGN channel. 
In practice, we are usually interested in using binary channel codes and, therefore, the so-called bit-metric decoding (BMD) of permutation codes, in which the mutual information induced via the labels of permutation codewords is considered. In the next section, we present BMD rates for Code 2 of Example \ref{ex:mi12} and its generalizations for various methods of soft demapping.
\end{remark}

\section{Soft Demapping of Permutation Codes}\label{sec:softdecode}
\subsection{Variant~I Encoder Function and its Inverse}\label{subsec:varienc}

\begin{algorithm}
\caption{An encoding algorithm based on~\cite{nordio_2003_permutation}. The output of the algorithm is a vector of indices $\bm{c}$ of length~$n$. Each index is an integer between $1$ and $u$ and represents a symbol in the vector $\bm{\mu}$. That is, the corresponding codeword is $\bm{\mu}[\bm{c}]$.}\label{alg:enc_general}
\DontPrintSemicolon
\KwIn{codeword integer index $q$, vector of repetitions $\bm{m}$, cardinality of the code $\mathit{size}$, code length $n$}
\KwOut{index vector $\bm{c}$}
$\bm{c} \leftarrow$ initialize to a vector of length $n$\;
\For{$\mathit{index} \leftarrow 1$ \KwTo $n$}{
  $\mathit{letter} \leftarrow 0$\;
  $\mathit{size}_{\mathrm{temp}} \leftarrow 0$\;
  \While{$q \geq 0$}{
    $\mathit{letter} \leftarrow \mathit{letter} + 1$\;
    $\mathit{size}_{\mathrm{temp}} \leftarrow \lfloor(\mathit{size} \cdot \bm{m}[\mathit{letter}]) / (n - \mathit{index} + 1)\rfloor$\;
    $q \leftarrow q - \mathit{size}_{\mathrm{temp}}$\;
  }
  $\mathit{size} \leftarrow \mathit{size}_{\mathrm{temp}}$\;
  $q \leftarrow q + \mathit{size}_{\mathrm{temp}}$\;
  $\bm{c}[\mathit{index}] \leftarrow \mathit{letter}$\;
  $\bm{m}[\mathit{letter}] \leftarrow \bm{m}[\mathit{letter}] - 1$\;
}
\Return{$\bm{c}$}\;
\end{algorithm}

A Variant~I permutation code $\mathcal{C}$ is uniquely identified by its initial vector $\bm{x}$. Note that when the length of the code is $n$, the number of distinct elements in $\bm{x}$ is at most $n$. That is,
\[
\phantom{.}\lvert\{x_j \mid j\in [n] \}\rvert \leq n.
\]
We denote the number of unique elements of $\bm{x}$ by $u$. We define $\bm{\mu}$ as the vector whose elements are the distinct elements of $\bm{x}$ in ascending order. That is, if
\[
\phantom{,}\bm{\mu} = (\mu_1, \mu_2, \mu_2, \dots, \mu_u),
\]
then $\mu_1 < \mu_2 < \mu_3 < \dots < \mu_u$. The number of times each $\mu_i$ occurs in $\bm{x}$ is denoted as $m_i$. This means that
\[
\phantom{.}m_1 + m_2 + m_3 + \dots + m_u = n.
\]
The size of this code is
\begin{equation}\nonumber
\lvert \mathcal{C}\rvert = \frac{n!}{m_1!m_2!m_3!\dots m_u!}
\end{equation}
and its rate is $r = \frac{\log_2\lvert \mathcal{C}\rvert}{n}$.

The problem of encoding permutation codes has been considered by many~\cite{berger_1972_permutation, bocherer_2016_constant, myrvold_2001_ranking, nordio_2003_permutation}. Here, we adapt the encoder provided in~\cite{nordio_2003_permutation}.
A pseudocode for this algorithm is provided in Algorithm~\ref{alg:enc_general}. A pseudocode for the inverse encoder is given in Algorithm~\ref{alg:inv_enc_general}. The computational complexity of both algorithms is $O(nu)$ arithmetic operations.

\subsection{Variant~II Encoder Function and its Inverse}

\begin{algorithm}
\caption{An inverse for the encoder of Algorithm~\ref{alg:enc_general} that produces the index in the range $[0,\lvert\mathcal{C}\rvert-1]$ of the input permutation in lexicographic order.}\label{alg:inv_enc_general}
\DontPrintSemicolon
\KwIn{index vector $\bm{c}$, vector of repetitions $\bm{m}$, cardinality of the code $\mathit{size}$, code length $n$}
\KwOut{codeword integer index $q$}
$q \gets$ $0$\;
\For{$index \gets 1$ \KwTo $n$}{
  \For{$k \gets 1$ \KwTo $\bm{c}[\mathit{index}]-1$}{
    $\mathit{size}_i \gets \lfloor \mathit{size} \cdot \bm{m}[k]/(n-\mathit{index}+1)\rfloor$\;
    $q \gets q + \mathit{size}_i$\;
  }
  $\mathit{size} \gets \lfloor \mathit{size} \cdot \bm{m}[\bm{c}[\mathit{index}]]/(n-\mathit{index}+1)\rfloor$\;
  $\bm{m}[\bm{c}[\mathit{index}]] \gets \bm{m}[\bm{c}[\mathit{index}]] - 1$\;
}
\Return $q$\;
\end{algorithm}

A Variant~II permutation code $\mathcal{C}$, similar to a Variant~I permutation code, is uniquely identified by its initial vector $\bm{x}$. Note that all elements of the initial vector $\bm{x}$ are positive.
We use the same notation $u$ to denote the number of distinct elements of $\bm{x}$. Similarly, the vector $\bm{\mu}$ is defined as the vector whose elements are the distinct elements of $\bm{x}$ in ascending order. The number of times each $\mu_i$ occurs in $\bm{x}$ is denoted as $m_i$. 
The rate of this code is
\[
\phantom{.}r = 1+\frac{1}{n}\log_2 \frac{n!}{m_1!m_2!m_3!\dots m_u!}.
\]

Let $k = \lfloor nr\rfloor$. Encoding an integer in the range $0$ to $2^k-1$ can be done using an encoder for its constituent Variant~I permutation code. The $k-n$ least significant bits are mapped to a codeword $\bm{c}$ in the constituent Variant~I permutation code using an encoder as in Section~\ref{subsec:varienc}. The signs of the symbols in the resulting codeword $\bm{c}$ are chosen using the $n$ most significant bits. If the encoder for the constituent Variant~I permutation code is $\enc_{\mathrm{I}}$, the encoder for the Variant~II permutation code will be the function 
\[
\enc_{\mathrm{II}}:\{0,1\}^k\to \mathcal{C}
\]
such that
\begin{align*}
&\enc_{\mathrm{II}}(b_1,b_2,b_3,\dots, b_k)=\\&\diag(1\!-\!2b_1,1\!-\!2b_2,\dots,1\!-\!2b_n)\enc_{\mathrm{I}}(q),
\end{align*}
where, base-$2$ representation of $q$ is $b_{n+1}b_{n+2}\dots b_k$.
Similarly, an inverse encoder can be defined using the inverse encoder for the constituent Variant~I permutation code.

\subsection{Permutation Encoding for PAS}\label{subsec:bitmapping}
We are now ready to explicitly describe the first block in the PAS system of Fig.~\ref{fig:pas}. Assume we want to use a Variant~I permutation code with initial vector $\bm{x}$ and blocklength $n$ whose rate is $r$. The first step for us is to expurgate the corresponding Variant~I code so that its size is a power of $2$. Let $k_\mathrm{a} = \lfloor nr\rfloor$. To identify a sequence of amplitudes of length $n$, i.e., a Variant~I codeword of blocklength $n$, one can use the encoder defined in Section~\ref{subsec:varienc}. The input is an integer $i$ in the range $[0, 2^{k_\mathrm{a}})$. 
Therefore, each codeword in this Variant~I permutation code can be indexed by a binary vector of length $k_\mathrm{a}$. This way, if $nr$ is not an integer, the Variant~I permutation code is expurgated so that only the first $2^{k_\mathrm{a}}$ codewords, based on the lexicographical ordering, are utilized. This can potentially introduce a considerable correlation between the elements of codewords and breaks the geometric uniformity of the permutation code~\cite{forney_1991_geometrically}. 

In order to properly randomize so that each codeword is surrounded by approximately equal number of codewords at any given distance, we first try to randomly spread the set of integers in the range of $[0, 2^{k_\mathrm{a}})$ on the integers in the range of $[0, 2^{nr})$. One way to achieve this is to take the sampled integer $i$ from $[0, 2^{k_\mathrm{a}})$ and multiply it by a randomly chosen integer $e$ coprime with $2^{nr}$ and add another randomly chosen integer $d$ in $[0, 2^{k_\mathrm{a}})$ modulo $2^{nr}$. The coprime factor $e$ and the dither $d$ can be generated based on some shared randomness so that the receiver can also reproduce them synchronously. Consequently, the first block in Fig.~\ref{fig:pas} takes in an integer $i$ from $[0, 2^{k_\mathrm{a}})$ at the input and produces a Variant~I codeword $\bm{c}$ by
\[
\phantom{.}\bm{c} = \enc_{\mathrm{I}}(ei+d \pmod {2^{nr}}).
\]
The codeword $\bm{c}$ serves as the set of amplitudes in PAS and is labeled based on the Gray labeling of the constituent PAM constellation. The labels are then encoded using a systematic soft-decodable channel code. The resulting parity bits are used to set the signs of each of the amplitudes and as a result, a Variant~II permutation code is produced.

\begin{example} 
To better understand the significance of randomization, consider the case of using a permutation code $\mathcal{C}$ and choosing the first $\log_2\lfloor\lvert C\rvert\rfloor$ codewords according to the lexicographical ordering of the codewords. For example, assume that the initial vector is
\[
v = (1,1,1,1,1,3,3,3,5,5,5,7).
\]
With this choice, $\lvert \mathcal{C}\rvert \approx 2^{16.76}$. If we only consider the first $2^{16}$ codewords according to the lexicographical ordering, the last codeword that we use is
\[
c = (3, 5, 1, 3, 1, 1, 1, 3, 5, 1, 5, 7).
\]
Note, for instance, that in none of the codewords is the first symbol $5$ or $7$. Also note that the second symbol is never $7$. As is evident, the correlation between symbols is very strong. With the proposed randomization, such strong correlations are avoided to a great extend.
\end{example}

\subsection{Soft-in Soft-out Demapping}\label{sec:decodingMethods}
The input to the first block at the receiver in Fig.~\ref{fig:pas} contains noisy version of Variant~II permutation codewords. Assume that the received word is $\bm{y}$ and that we want to estimate the LLR for the $i^{\mathrm{th}}$ bit, $\mathrm{LLR}_i$, which is a part of the label in the $j^{\mathrm{th}}$ symbol, $c_j$, of the Variant~II codeword $\bm{c}$. Four different estimators are discussed in what follows.

\noindent\textbf{Exact Method:} To get the exact LLR, one should use
\begin{equation}\label{eq:exactllr}
\phantom{,}\mathrm{LLR}_i = \log\frac{\sum_{\bm{c}, b_i = 0}p(\bm{y} \mid \bm{c})}{\sum_{\bm{c}, b_i = 1}p(\bm{y} \mid \bm{c})},
\end{equation}
where, in the numerator, $\bm{c}$ runs over all Variant~II codewords that their $i^{\mathrm{th}}$ bit $b_i$ is $0$, and that 
\[
\phantom{,}e^{-1}(\enc_{\mathrm{I}}^{-1}(\lvert c_1\rvert, \lvert c_2\rvert, \dots, \lvert c_n\rvert) - d) \pmod {2^{nr}} \in [0, 2^{k_\mathrm{a}}).
\]
The condition for codewords $\bm{c}$ in the denominator is similar except $i^{\mathrm{th}}$ bit $b_i$ is $1$. The total number of terms in both summations is $2^{n+k_\mathrm{a}}$ which is at least half of the size of the Variant~II code $\mathcal{C}$ and is typically very large. Note that there are $n\log_2\lvert \mathcal{M}\rvert$ label bits in each codeword and, as a result, the computational complexity of this method is $O(n\lvert \mathcal{C}\rvert\log\lvert \mathcal{M}\rvert)$ operations per permutation codeword. The exact method, therefore, is only useful for very short permutation codes.

\noindent\textbf{Symbol-by-Symbol:} In the symbol-by-symbol method of~\cite{bocherer_2015_bandwidth}, the $i^{\mathrm{th}}$ LLR is estimated using
\begin{equation}\label{eq:sbsllr}
\phantom{,}\mathrm{LLR}_i = \log\frac{\sum_{s, b_i = 0}m_sp(y_j \mid s)}{\sum_{s, b_i = 1}m_sp(y_j \mid s)},
\end{equation}
where $s$ runs over all symbols in the constituent constellation and, with slight abuse of notation, $m_s$ is the number of occurrences of $\lvert s \rvert$ in the initial vector. From the point of view of the receiver, any word from $\mathcal{M}^n$ is a possible codeword and in the computation of the LLRs all words in $\mathcal{M}^n$ can contribute. This allows for a computationally efficient method of estimating LLRs. For each LLR, computation of~(\ref{eq:sbsllr}) requires $O(\lvert \mathcal{M}\rvert)$ operations. Therefore, the computational complexity of the symbol-by-symbol method is $O(n\lvert \mathcal{M}\rvert\log\lvert \mathcal{M}\rvert)$ operations per permutation codeword.
However, this method is inaccurate at short blocklengths~\cite{schulte_2020_joint, luo_2023_joint}. 

\noindent\textbf{Orbit Decoding with Frozen Symbols:} One key contribution of this paper is this new and simple method of soft demapping using orbit decoding of permutation codes. If the $i^{\mathrm{th}}$ bit is an amplitude bit (i.e., it is not a sign bit), to estimate $\mathrm{LLR}_i$ we use
\begin{equation}\label{eq:ourllra}
\phantom{,}\mathrm{LLR}_i = \log\frac{\sum_{\bm{c}, b_i = 0}f_{\mathrm{o}}(\bm{y} \mid \bm{c})}{\sum_{\bm{c}, b_i = 1}f_{\mathrm{o}}(\bm{y} \mid \bm{c})},
\end{equation}
where $\bm{c}$ runs over all Variant~I codewords in the following list: For any amplitude $s\in\{\mu_1,\mu_2,\dots,\mu_u\}$, we only include the codeword $\bm{c}$ with the most likely orbit such that $c_j=s$ on the list. That is, the most likely orbits for every element of the codeword frozen to any possible amplitude will contribute in the calculation of $\mathrm{LLR}_i$. 

If the $i^{\mathrm{th}}$ bit is a sign bit, we only take into account the contribution of the half of the codewords in each orbit that have positive sign in the numerator and the other half in the denominator. Using a technique similar to the proof of Theorem~\ref{thm:orbitlikelihood}, one can show that for a sign bit
\begin{equation}\label{eq:ourllrs}
\phantom{.}\mathrm{LLR}_i = \log\frac{\sum_{\bm{c}}e^{\frac{c_jy_j}{\sigma^2}}\sech\frac{c_jy_j}{\sigma^2} f_{\mathrm{o}}(\bm{y} \mid \bm{c})}{\sum_{\bm{c}}e^{-\frac{c_jy_j}{\sigma^2}}\sech\frac{c_jy_j}{\sigma^2}f_{\mathrm{o}}(\bm{y} \mid \bm{c})}.
\end{equation}

To find the corresponding orbit of each codeword on the list, the assignment problem needs to be solved with the frozen symbol as an inclusion constraint (See the Appendix). Once the received word is sorted by the magnitude of its elements, the orbit likelihood of each item on the list can be found using $O(n)$ operations (see Theorem~\ref{thm:orbitlikelihood}). There are a total of $O(n\lvert \mathcal{M}\rvert)$ orbits to consider. Therefore, the orbit likelihoods can be computed with $O(n^2\lvert \mathcal{M}\rvert)$ complexity. Once the orbit likelihoods are computed, each of the $n\log_2\lvert \mathcal{M}\rvert$ LLRs can be found with $O(\lvert\mathcal{M}\rvert)$ complexity. 
Therefore, the computational complexity of this method is $O(n^2\lvert \mathcal{M}\rvert + n\lvert \mathcal{M}\rvert \log\lvert \mathcal{M}\rvert)$ operations per permutation codeword.
\begin{example}
Consider a permutation code with the initial vector
\[
\phantom{.}(1,1,1,1,1,1,1,3).
\]
The corresponding Variant~I code has size $2^3$, and since $nr = 3$ is an integer, there is no need for random spreading. Assume that we send the codeword 
\[
\bm{x} = (1,1,1,1,1,1,1,3)
\]
and the noise vector, sampled from a standard normal random distribution and rounded to one significant digit, is
\[
\phantom{.}\bm{z} = (2.1, 0.2, 0.1, 1.5, 0.7, 1.6, -1.9, 0.2).
\]
Assume that the constituent constellation is labeled such that the amplitude of $1$ is labeled with bit $0$ and the amplitude of $3$ is labeled with bit $1$. The LLR for the first amplitude bit using the exact method is $0.69$. The estimate for this LLR using the symbol-by-symbol method is $-0.25$. The estimate for the LLR using orbit decoding is $0.20$. Note that the sign of the LLR using the symbol-by-symbol method is more in favor of an amplitude of $3$, while the other two values are in favor of an amplitude of $1$ in the first position. This shows that in this example the quality of the LLR estimated using the orbit decoding with frozen symbols is superior to the estimation using the symbol-by-symbol method.
\end{example}

\noindent\textbf{Soft Decoding Using the BCJR Algorithm:} For typical permutation code parameters of interest such as Code 2 of Example~\ref{ex:mi12}, the natural trellis representation of permutation codes has a prohibitively large number of states and edges~\cite{kschischang_1996_thetrellis, schulte_2020_joint}. However, the trellis representation of the more general spherical codes that we study in Section~\ref{sec:shell}, which include permutation codes as subcodes, is much more tractable. Similar to the symbol-by-symbol method, in which the receiver presumes a codebook encompassing the used permutation code, we can consider the larger, more general spherical codes and estimate the likelihood of each symbol in the codeword by running the Bahl--Cocke--Jelinek--Raviv (BCJR) algorithm~\cite{bahl_1974_optimal}. 

\subsection{Numerical Evaluation}

Consider the abstract channel for which the channel input is the label of the transmitted permutation codeword $\bm{B}=(B_1,B_2,\dots,B_{k_{\mathrm{a}}+n})$ and its output is the noisy permutation codeword $\bm{Y}=(Y_1,Y_2,\dots,Y_n)$. The bit-metric decoding (BMD) rate\footnote{Here, $\left[x\right]^+$ is the same as $\max \{x,0\}$.}~\cite{bocherer_2023_probabilistic}
\[
\left[H(\bm{B}) - \sum_i H(B_i \mid Y)\right]^+
\]
is an achievable rate~\cite{bocherer_2014_achievable} and can be used to compare different methods of soft demapping.
This is particularly useful as it provides an achievable rate when a binary soft-decodable channel code is intended to be used, without specifically restricting the system to a particular implementation of the channel code.
The BMD rate of permutations codes, as well as generalizations discussed in the next section, for different methods of estimating LLRs are shown in Fig.~\ref{fig:bmd}.

\begin{figure}
    \centering
    \includegraphics[width=\columnwidth]{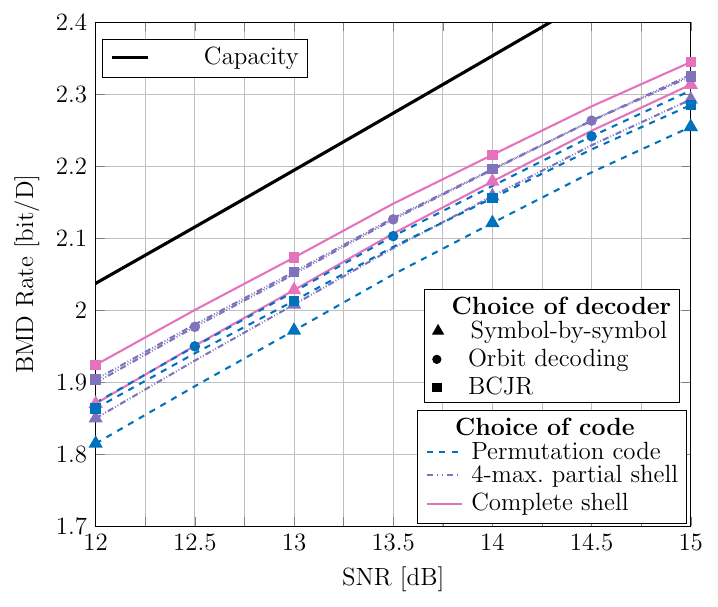}
    \caption{BMD rates for various spherical codes with $n=50$ and $\mathtt{E}=530$ with a $8$-PAM constituent constellation are shown. }
    \label{fig:bmd}
\end{figure}

To better compare the symbol-by-symbol method with our proposed method using orbit decoding as well as decoding using the BCJR algorithm, we use a permutation code in a PAS scheme. We compare the permutation block error rate (BLER) for Code 2 from Example~\ref{ex:mi12} for these three estimation methods.
By each block, we mean a permutation codeword (or, in general, a spherical codeword) and BLER is the ratio of
blocks received in error after the channel decoder to the total number of blocks sent.
For the channel code, we use the $(10\,860, 8\,448)$ low density parity check (LDPC) code from the 5G~NR standard~\cite{5g_2018_ldpc}. Each LDPC code frame contains $67$ permutation codewords. Among the sign bits, $14$ are used for systematic information bits~\cite{bocherer_2015_bandwidth}. In Fig.~\ref{fig:BLER}, we show the BLER for these three LLR estimation methods. Note that we think of the case of using a permutation code with symbol-by-symbol decoding~\cite{bocherer_2015_bandwidth} as the baseline for comparison with other spherical codes and decoding methods that we study. For instance, it is clear that at BLER~$=10^{-3}$, about $0.3$~dB improvement is achieved by using the orbit decoding with frozen symbols compared with the symbol-by-symbol method. Also, about $0.2$~dB improvement is achieved compared with the BCJR method. In Fig.~\ref{fig:BLER}, similar results are shown for more general spherical codes as well. The structure of these codes will be explained in Section~\ref{sec:shell}.

\section{Structure of Constant-Energy Shells of PAM Constellations}
\label{sec:shell}
In this section, we explore a generalization of permutation codes that consists of codewords from a union of permutation codes, each corresponding to a distinct initial vector or ``type".  If each of these initial vectors has the same energy, the resulting code is termed a ``spherical code".  In comparison to permutation codes, this construction can deliver an increase in code rate.  At the same time the codes we now describe are also spherical codes and so, as mentioned in the introduction, can reduce the XPM in the optical fiber channel.  We are particularly interested in constructions of spherical codes that abide by the constraints enforced by standard (PAM) modulation formats.  Given an underlying constellation and a desired codeword energy (squared sphere radius), the largest code can be designed is the intersection of a sphere of the given radius with the $n$-fold product of the underlying constellation.

Larger spherical codes, while often superior in rate, can have high complexity for storage, encoding, and decoding.  For this reason, for a given PAM constellation, we also investigate the structure of these codes so as to describe intermediate subcodes that can increase code performance without drastically increasing complexity.

\begin{figure}
    \centering
    \includegraphics[width=\columnwidth]{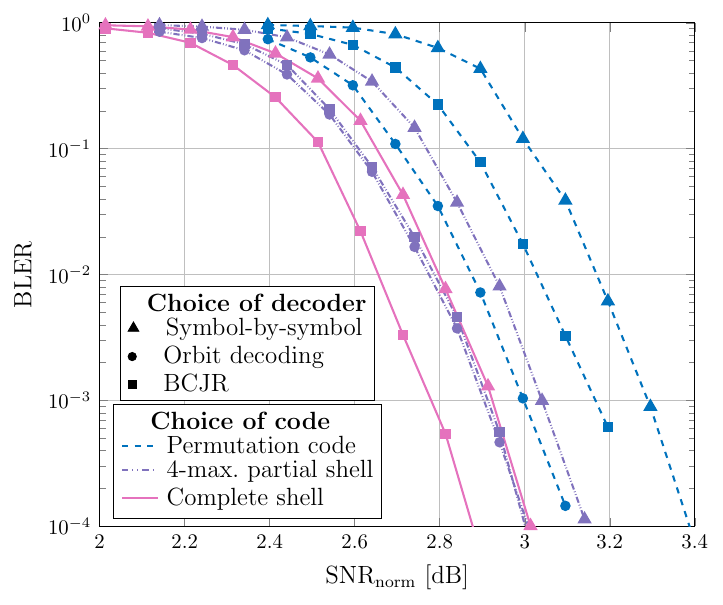}
    \caption{Block error rates for various spherical codes with $n=50$ and $\mathtt{E}=530$ with a $8$-PAM constituent constellation. The $(10\,860, 8\,448)$ LDPC code from the 5G~NR standard is used in the PAS architecture.}
    \label{fig:BLER}
\end{figure}

\subsection{Complete Shell Codes}

We design codes with blocklength $n$ that are shells of the product of $n$ $2p$-PAM constellations, constrained to have total energy $\mathtt{E}$ in each block.  This is formalized as follows.

\begin{definition}
Given a symbol set $\mathcal{M}$, a blocklength-$n$ code consisting of all codewords of the form $(x_1, x_2, \dots, x_n)$, where the $x_i$s satisfy $x_i \in \mathcal{M}$ and 
$\sum_{i = 1}^n x_i^2 = \mathtt{E}$ is referred to as a \emph{complete shell code} over $\mathcal{M}$.
\end{definition}

We refer to the complete shell code of length $n$ and squared radius $\mathtt{E}$ over the $2p$-PAM constellation as the $(n, \mathtt{E}, p)$ code. Complete shell code automatically deliver a shaping gain, as the symmetry of the construction induces an approximate Boltzmann distribution over the symbols. That is, if codewords are selected uniformly, the probability distribution over the symbols $x_i$ is $p(x_i) \propto \exp[\lambda x_i^2]$, where $\lambda$ is a function of the code parameters $n$ and $\mathtt{E}$. Note that this is a discretized version of a Gaussian distribution.

Complete shell codes can be decomposed into a union of permutation subcodes as $\mathcal{C} = \cup_{i = 1}^{t(\mathcal{C})}\mathcal{C}^i$, where each $\mathcal{C}^i$ is a permutation code, and the number of distinct permutation subcodes $t(\mathcal{C})$ is a function of the code parameters.  We will also write $t(\mathcal{C}) = t(n, \mathtt{E}, p)$ when $\mathcal{C}$ is the $(n, \mathtt{E}, p)$ code.  We refer to each subcode as a \emph{type class} following the terminology of, e.g.,~\cite[Ch.~11.1]{cover_1999_elements}.  Describing the code in this union form is useful when describing and analyzing the code performance.

We call the initial vector in a type class the \textit{type class representative}.  For instance, the $(8, 32, 4)$ code has two type classes, with representatives $(1, 1, 1, 1, 1, 3, 3, 3)$ and $(1, 1, 1, 1, 1, 1, 1, 5)$.  The first class contains $14\,336 \approx 2^{13.8}$ codewords, and the second $2048 = 2^{11}$.  In general, the number of type classes $t(n, \mathtt{E}, p)$ of a given energy $\mathtt{E}$ is a complicated function, which we usually compute numerically.  A loose bound for the maximum value of $t(n, \mathtt{E}, p)$ when varying $n$ and $p$ is $O(n^p)$.  This becomes intractable quickly for larger values of $p$.

\subsection{Encoding and Decoding Complete Codes}

\subsubsection{Encoding}

It is natural to extend the encoding and decoding methods described in Section~\ref{sec:softdecode} to the complete shell codes. For encoding, if the $t$ type-class subcodes are $\mathcal{C}^1, \mathcal{C}^2, \dots, \mathcal{C}^t$, we can assign the first $\lvert\mathcal{C}^1\rvert$ codeword indices to the first type class, the next $\lvert\mathcal{C}^2\rvert$ to the second, and so forth. This is formalized in Algorithm~\ref{alg:enc_types}. The index of the corresponding permutation subcode can be found using binary search on the vector cumulative sums $\mathit{CS} = (\lvert\mathcal{C}^1\rvert, \lvert\mathcal{C}^1\rvert+\lvert\mathcal{C}^2\rvert, \dots, \sum_{i=1}^T\lvert\mathcal{C}^i\rvert)$.
Since the complexity of Algorithm~\ref{alg:enc_general} is $O(np)$ (as $u = p$ for the PAM case), our new algorithm has complexity $O(np +  t(n, \mathtt{E}, p)) \approx O(n^p)$. 
The randomization method of Section~\ref{subsec:bitmapping} can be used here as well. The index input to this algorithm is obtained similarly to Section~\ref{subsec:bitmapping}, with the exception that the code rate $r$ is the rate of the complete shell code.

\begin{algorithm}
\caption{A generalization of Algorithm~\ref{alg:enc_general} to the case of the shell code. The output is a vector of indices $\bm{c}$ of length~$n$ and a type class index $k$ between 1 and $t$. That is, the corresponding codeword is $\bm{\mu}^k[\bm{c}]$.}\label{alg:enc_types}
\DontPrintSemicolon
\KwIn{codeword integer index $q$, vectors of repetitions $\bm{m}^1, \bm{m}^2, \dots, \bm{m}^t$, cumulative cardinalities of the subcodes $\mathit{CS}$, code length $n$}
\KwOut{index vector $\bm{c}$, type class index $k$}
Find the index $k$ such that $\mathit{CS}[k-1]<q\leq \mathit{CS}[k]$.
$\mathit{size} \gets \mathit{CS}[k] - \mathit{CS}[k-1]$\;
$c \gets enc_I(q - \mathit{CS}[k-1]; \bm{m}^k, \mathit{size}, n)$\;
\Return{$c, k$}\;
\end{algorithm}

\subsubsection{Orbit decoding}

For decoding, we again consider our code as a union of permutation codes (type classes).  When forming the list of most likely orbits for each frozen symbol, we find the most likely orbit for each type class and pick the one with the highest likelihood to put on our list. Once the received word is sorted by the magnitude of its elements, the orbit likelihood of each item on the list can be found using $O(n)$ operations. There are a total of $O(npt(n, \mathtt{E}, p))$ orbits to consider. Once the list of orbits with the highest likelihoods is formed, each of the $O(n\log_2 p)$ LLRs can be computed with $O(p)$ complexity. Therefore, the total computational complexity per spherical codeword is $O(n^2pt(n, \mathtt{E}, p) + np\log p)\approx O(pn^p)$.

\subsubsection{BCJR Decoding}

A natural alternative decoding procedure to consider in this context is BCJR decoding, as described in~\cite{bahl_1974_optimal}.  The complete shell code has a simple trellis structure, similar to that discussed in~\cite{gutelkin_2020_enumerative}.  An example trellis for the $(8, 32, 3)$ code is shown in Fig.~\ref{fig:trellis}.  Here, each path from left to right through the trellis represents an unsigned codeword, with states labeled by the energy accumulated up to that point in the code.  It can be shown that at depths between 3 and $n - 3$, the trellis will have at most $(\mathtt{E} - n)/8$ states.  Since each state has a maximum of $p$ outgoing edges, the number of edges in the trellis is $O(n(\mathtt{E} - n)p)$, which is the complexity of the algorithm.

\begin{figure}
    \centering
    \includegraphics[width=\columnwidth]{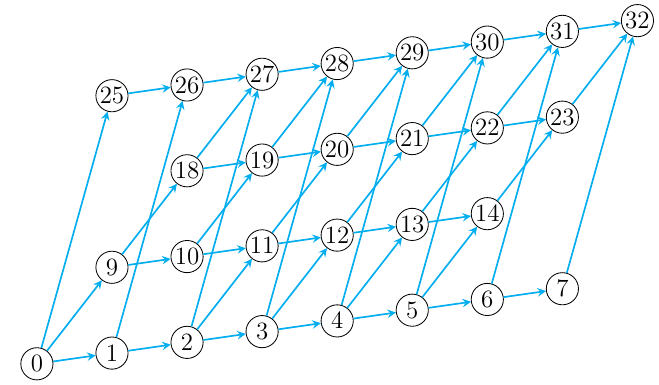}
    \caption{The trellis for the $(8, 32, 3)$ code.}
    \label{fig:trellis}
\end{figure}

\subsubsection{Results}

If we accept our $O(n^{p - 2})$ heuristic for $t(n, \mathtt{E}, p)$, we see that orbit decoding has a much worse complexity than BCJR decoding, meaning that BCJR decoding is preferable.  In Fig.~\ref{fig:BLER} we see the results of decoding the complete $(50, 530, 4)$ shell code, using both symbol-by-symbol decoding and BCJR decoding.  
Results are presented in terms of the rate-normalized SNR which is defined as follows.  
The channel coding theorem for the real AWGN channel indicates that a coding scheme with rate $\rho$ can achieve an arbitrarily small probability of error if and only if 
\[
\rho \leq \frac{1}{2}\log_2(1+\mathrm{SNR}).
\]
Equivalently, a coding scheme with rate $\rho$ can achieve an arbitrarily small probability of error if and only if
\[
\frac{\mathrm{SNR}}{2^{2\rho}-1}\geq 1.
\]
The quantity $\frac{\mathrm{SNR}}{2^{2\rho}-1}$ is defined as the rate-normalized SNR $\mathrm{SNR}_{\mathrm{norm}}$~\cite{forney_1998_modulation}.

Symbol-by-symbol decoding is not optimal.  Furthermore, the complete code outperforms the permutation code.

\subsection{Partial Shell Spherical Codes}

For large $n$, orbit decoding is impractical. Even BCJR decoding becomes unwieldy as $\mathtt{E}$ grows.  Here we define a new family of codes that has competitive rate and shaping properties, with vastly improved encoding and decoding complexities.  In particular, keeping only a small number of the largest type classes in a complete shell code is often sufficient to create a high-performing code.  We formalize this as follows.

\begin{definition}
Take a complete shell code $\cup_{i = 1}^{t(\mathcal{C})}\mathcal{C}^i$, where each $\mathcal{C}^i$ is a distinct type class with symbols drawn from a modulation format $\mathcal{M}$, and $\lvert\mathcal{C}^1\rvert \ge \lvert\mathcal{C}^2\rvert \ge \dots \ge \lvert\mathcal{C}^t\rvert$. The subcode of $\mathcal{C}$ defined by $\mathcal{C}^1\cup \mathcal{C}^2 \cup \dots \cup \mathcal{C}^k$ is referred to as the \emph{maximal $k$-class partial shell code} over $\mathcal{M}$.
\end{definition}

The maximal $k$-class partial shell code consists of the $k$ largest type classes.  In many cases, relatively small values of $k$ result in codes whose rates are only marginally smaller than that of the complete shell code, with the benefit that the encoding and decoding algorithms now have a constant number of type classes, i.e., we have replaced the function $t(\mathcal{C})$ with the constant $k$.  Note that $k = 1$ corresponds to a permutation code.

\begin{example}\label{ex:selective}
Consider the $(50, 530, 4)$, $(25, 305, 4)$ and $(100, 996, 4)$ codes.  While these codes have different block-lengths, $\mathtt{E}$ was chosen to give similar numbers of bits per dimension.  In each case, we consider the $k$ subcodes for different values of $k$, comparing the rate of the partial shell code to the rate of the complete code.  In Fig.~\ref{fig:best_classes} the fractional rate loss is very small even when $k$ is small.  These codes have 34, 113, and 369 type classes respectively, but values of $k$ below 10 already achieve 99\% of the rate.
\end{example}

\begin{figure}
    \centering
    \includegraphics[width=\columnwidth]{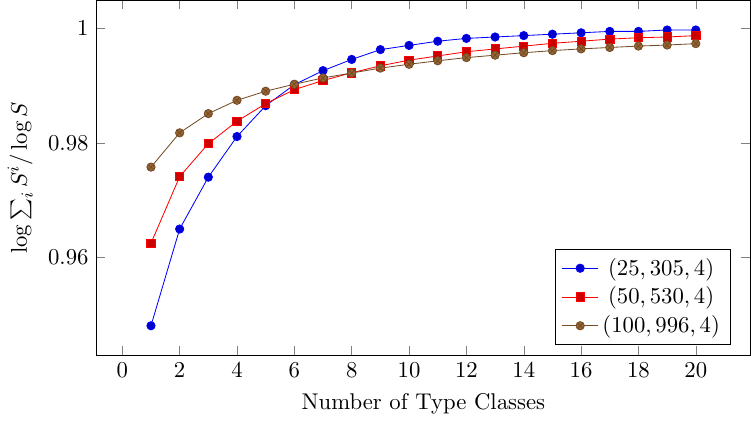}
    \caption{A comparison of sizes of subcodes constructed from the largest type classes in the indicated codes.}
    \label{fig:best_classes}
\end{figure}

The acceptable reduction in rate will depend on the use-case for the code, but we already see the complexity benefits of using selective codes instead of complete codes.  Conversely, if we want to create a selective code from a union of permutation codes, we see that the rate increases for adding each subsequent code to the union decrease quickly.

As previously noted, one of the appealing features of complete codes over PAM constellations is that they induce an approximate marginal Boltzmann distribution.  Partial shell codes also have this property.  This is a result of the fact that they are constructed from the largest type classes available, whose symbols $m_i$ are distributed proportionally to $\exp[\lambda(2i - 1)^2]n$, a Boltzmann distribution, where parameter $\lambda$ depends on $\mathtt{E}$.  This result (more specifically its dual) is shown in~\cite{ingemarsson_1990_optimized}.

The exact Boltzmann distribution does not give an integer solution in general, but the non-integer solution is often a near approximation for solutions.  In fact, since we are often interested not only in the largest type class, but in the $k$ largest type classes, an alternative method for constructing $k$-class partial shell codes is to start with the non-integer Boltzmann solution and search for several nearby solutions, e.g., by iterating over all combinations of integer values within some neighborhood of each non-integer entry, and keeping the resulting vectors that are the $k$ best solutions.  

\begin{example}
As an example, for the $(n, \mathtt{E}, p) = (50, 530, 4)$ code, whose largest type class is Code 2 in Example~\ref{ex:mi12}, we have the following exact Boltzmann distribution:
\begin{equation}
(m_1, m_2, m_3, m_4) = (22.38, 16.12, 8.37, 3.13)
\end{equation} with $\lambda = -0.04$.  The largest type classes in the integer-constrained case are given in Table~\ref{tab:classEx}.  The distributions are similar to the non-integer solution, and can be found by searching for integer solutions in its neighborhood.  Each type class in this table contains approximately $2^{79}$ codewords, not counting sign flips (i.e., in the Variant~I version).  Most of the balance of the 113 type classes in the complete shell code are much smaller and contribute little, as seen in Example~\ref{ex:selective}.
\end{example}

\begin{table}
\begin{center}
\begin{tabular}{|c|c|}
\hline
$(m_1, m_2, m_3, m_4)$ & Bits\\
\hline
(24, 15, 7, 4) & 79.87\\
(21, 18, 8, 3) & 79.40\\
(23, 15, 9, 3) & 78.45\\
\hline
\end{tabular}
\end{center}
\caption{The largest type classes for the $(50, 530, 4)$ code.}
\label{tab:classEx}
\end{table}

\subsection{Encoding and Decoding Partial Codes}

The primary benefit of partial codes over complete codes is their improved complexity of decoding. Encoding can be done using Algorithm~3 with using $k$ type-classes.

\subsubsection{Encoding}
The encoder is the same as the encoder in Algorithm~\ref{alg:enc_types}, with the one change that $t(n, \mathtt{E}, p)$ is replaced with constant $k$.  This now has complexity $O(np)$, which is much more practical. The randomization method of Section~\ref{subsec:bitmapping} can be used here as well.

\subsubsection{Type-class decoding}

The orbit decoding algorithm is also the same as the algorithm for the complete case.  Again, $t(n, \mathtt{E}, p)$ is replaced with $k$, which gives a complexity of $O(np(n + \log p))$.

\subsubsection{BCJR decoding}

Finally, we can still use BCJR decoding in the partial shell case.  The trellis for the partial code is complicated, but we can decode more efficiently by using the trellis for the complete code, and reporting an error if the output is not a codeword.  This still has complexity $O(np(\mathtt{E} - n))$.

Comparing complexities, we see that orbit decoding is now preferable when $n + \log p \in o(p(\mathtt{E} - n))$, which happens as $E$ becomes large for fixed $n$ and $p$.

\subsection{Results}
Fig.~\ref{fig:BLER} also shows the SNR values for the 4-maximal partial shell code. We see that they are intermediate between the results for the permutation and complete codes. It is important to note that, compared with orbit decoding, the complete shell code with symbol-by-symbol decoding performs very well and is less than 0.15~dB worse than decoding with the BCJR algorithm. In terms of computational complexity, this is much cheaper than the BCJR algorithm and, therefore, the complete shell code with symbol-by-symbol decoding is still practically competitive.

\section{Conclusions}
In this paper, we have studied the use of more general spherical codes in place of permutation codes for probabilistic amplitude shaping at short blocklengths, which are of particular relevance to optical communications. A key result is the development of a novel soft-demapping technique for these short spherical codes. When used in a PAS framework with a standard LDPC channel code, our findings suggest that a gain of about $0.5$~dB can be achieved compared to the conventional case of using one permutation code with the symbol-by-symbol decoding. In terms of BMD rate, the total gain is calculated to be more than $0.6$~dB for a particular spherical codes at blocklength $n=50$.

However, while our results are promising, it remains to be seen how well our proposed methods work for nonlinear optical fiber. Future work will focus on further exploring this aspect, while studying possible improvements and general performance versus complexity trade-offs for our proposed methods.
\begin{figure}
\centering
\includegraphics[width=0.75\columnwidth]{./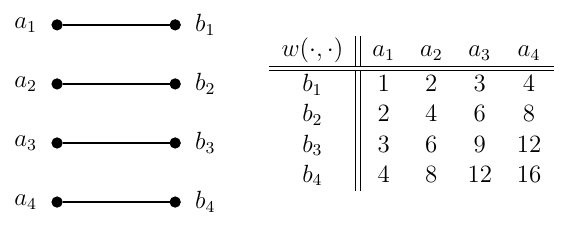}
\caption{The matching corresponding to the solution with the largest reward is shown on the left. The weight function for the assignment problem of our running example is also shown on the right.}
\label{fig:example1}
\end{figure}

\appendix[Computation of Mutual Information] 
One can calculate the mutual information between input and output of the channel by Monte Carlo evaluation of the mutual information integral given by
\[
\phantom{,}\int \sum_{\bm{x}\in \mathcal{C}}p(\bm{x},\bm{y})\log\frac{p(\bm{y}\mid \bm{x})}{p(\bm{y})}\,\mathrm{d}\bm{y}.
\]
To find $p(\bm{y})$ for each trial, one may use the total law of probability
\[
\phantom{.}p(\bm{y}) = \sum_{\bm{x}\in \mathcal{C}}p(\bm{y}\mid \bm{x})p(\bm{x}).
\]
The code $\mathcal{C}$ is typically very large, meaning that this summation cannot be computed exactly. Instead, we use a list-decoding approximation as follows:
\begin{equation}\label{eq:approxpy}
p(\bm{y})\approx\sum_{\bm{x}\in \mathcal{L}}p(\bm{y}\mid \bm{x})p(\bm{x})
\end{equation}
where $\mathcal{L} = \{\bm{c}_0, \bm{c}_1, \dots, \bm{c}_{L-1}\}$ is a list of $L$ codewords with the highest likelihoods so that
\[
\phantom{.}p(\bm{y}\mid \bm{c}_0) \geq p(\bm{y}\mid \bm{c}_1) \geq \dots \geq p(\bm{y}\mid \bm{c}_{L-1}). 
\]
If $L$ is large enough, $p(\bm{y})$ can be approximated accurately. 
Because of the exponential decay of a Gaussian distribution, usually a computationally feasible list size is enough to get an accurate estimate of $p(\bm{y})$. Note that higher SNRs require smaller list sizes. 

Murty's algorithm finds $L$ solutions to the assignment problem with the largest rewards~\cite{murty_1968_algorithm}. We describe a specialized version of Murty's algorithm pertinent to the list decoding problem of this paper using a running example. The first step is to find the maximizing solution $\bm{c}_0$. 
\begin{example}
We consider the assignment problem with 
\begin{align*}
\mathcal{U} &= \{a_1, a_2, a_3, a_4\},\\
\mathcal{V} &= \{b_1, b_2, b_3, b_4\},
\end{align*}
and the weight function $w(a_i,b_j) = ij$ as our running example, as shown in Fig.~\ref{fig:example1}. We want to find 
\[\phantom{,}\{\bm{c}_0,\bm{c}_1,\bm{c}_2,\bm{c}_3,\bm{c}_4\},\]
the $5$ solutions to the assignment problem with largest rewards.  The matching representing the solution $\bm{c}_0$ is also shown in Fig.~\ref{fig:example1}.
\end{example}

\begin{figure}
\centering
\includegraphics[width=0.5\columnwidth]{./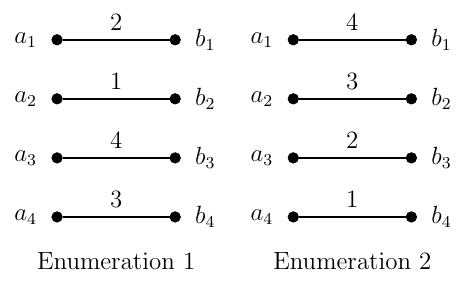}
\caption{Two different enumerations for the edges in $\bm{c}_0$ are shown.}
\label{fig:example_index}
\end{figure}

The next step in Murty's algorithm is to choose an arbitrary enumeration\footnote{An enumeration of a non-empty finite set $A$ is a bijection $f_A:[\lvert A\rvert]\to A$. In other words, an enumeration is an ordered indexing of all the items in the set $A$ with numbers $1,2,\dots,\lvert A\rvert$.} of the edges in the matching $\bm{c}_0$.  Then we partition the set of remaining matchings into $n-1$ subsets which we refer to as \emph{cells}. Each cell is specified by a set of \emph{inclusion} and \emph{exclusion} constraints. The first cell contains all matchings that exclude edge $1$ in $\bm{c}_0$. The second cell contains all matchings that include edge $1$ but exclude edge $2$. The next cell contains all matchings that include both edges $1$ and $2$ but exclude edge $3$. The algorithm goes on to partition the set of matchings similarly according to the enumeration of the edges in $\bm{c}_0$. A weight function is defined for each cell by modifying the original weight function based on the inclusion and exclusion constraints. For an excluded edge, the corresponding weight is set to $-\infty$. For an included edge, the two vertices incident with that edge are removed from the vertex sets $\mathcal{U}$ and $\mathcal{V}$ and the corresponding row and column of the weight function are removed. The assignment problem is then solved for each cell with respect to the modified weight function. 

\begin{example}
We consider two different enumerations for the edges in $\bm{c}_0$. These are shown in Fig.~\ref{fig:example_index}. The cells obtained by partitioning according to both enumerations are shown in Fig.~\ref{fig:example_ind1} and Fig.~\ref{fig:example_ind2}, respectively. For each cell, the inclusion constrains are shown with blue dotted lines and the exclusion constraints are shown with red dashed lines.
\end{example}

\begin{figure}
\centering
\includegraphics[width=\columnwidth]{./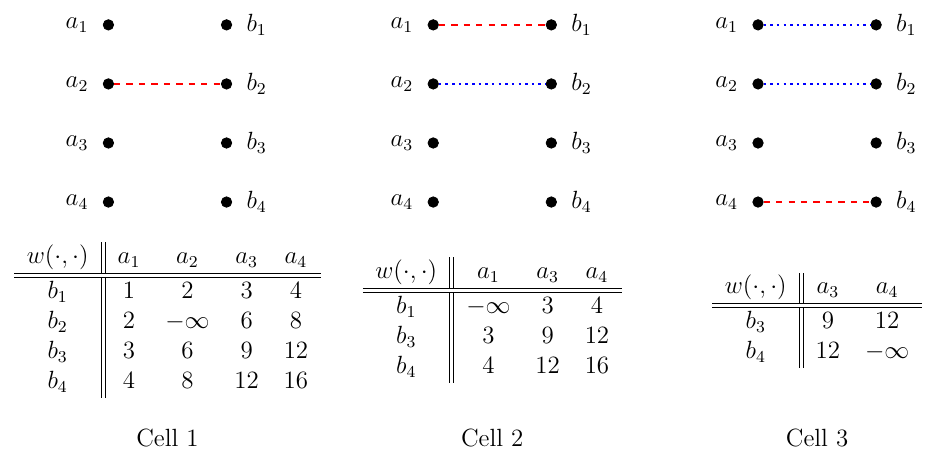}
\caption{The cells obtained after partitioning with respect to Enumeration $1$ for $\bm{c}_0$.}
\label{fig:example_ind1}
\end{figure}

The algorithm then proceeds by solving the assignment problem for each cell. Due to the introduction of $-\infty$ into the range of weight function, it should be clear that the weight function of the resulting assignment problems are not necessarily multiplicative. Therefore, solving the assignment problem for each cell may not be as easy as sorting. However, if the enumeration of the edges in $\bm{c}_0$ is chosen carefully, solving the resulting assignment problem may still be easy. In particular, if we can start off by matching the vertices in $\mathcal{U}$ that correspond to columns in the weight function with $-\infty$ entries, we can reduce the problem again to an assignment problem with a multiplicative weight function. To formalize this intuition, the following definition will be useful. 

\begin{definition}
Let $\mathcal{H}$ be a perfect matching of the balanced bipartite graph $G=(\mathcal{U},\mathcal{V},\mathcal{E})$. Let $w_{\mathcal{U}}:\mathcal{U}\to\mathbb{R}\cup\{-\infty\}$ be a weight function for the vertices in $\mathcal{U}$. An enumeration $f_{\mathcal{H}}$ of $H$ is said to be ordered with respect to $w_{\mathcal{H}}$ if
\[
\phantom{.}f_{\mathcal{H}}(a_1,b_1) > f_{\mathcal{H}}(a_2,b_2) \iff w_{\mathcal{U}}(a_1) \leq w_{\mathcal{U}}(a_2).
\]
\end{definition}
In other words, an enumeration is ordered with respect to the weight function of $\mathcal{U}$, if it sorts the edges based on the weight of the vertices in $\mathcal{U}$ that are incident with the edges in descending order. 
\begin{definition}
A weight function $w$ is \emph{almost multiplicative} if there exist two functions
\begin{align*}
w_{\mathcal{U}}:\mathcal{U}\to\mathbb{R}\cup\{-\infty\},\\
w_{\mathcal{V}}:\mathcal{V}\to\mathbb{R}\cup\{-\infty\},
\end{align*}
such that $w_{\mathcal{U}}$ assumes its maximum at $a^*\in \mathcal{U}$,
\[
\phantom{,}\forall b\in \mathcal{V} : w(a^*,b) \in\{-\infty, w_{\mathcal{U}}(a^*)w_{\mathcal{V}}(b)\},
\]
and
\[
\phantom{.}\forall (a,b) \in \mathcal{U}\times \mathcal{V}\setminus\{(a^*,b)\mid b\in \mathcal{V}\} : w(a,b)=w_{\mathcal{U}}(a)w_{\mathcal{V}}(b).
\]
\end{definition}
\begin{figure}
\centering
\includegraphics[width=\columnwidth]{./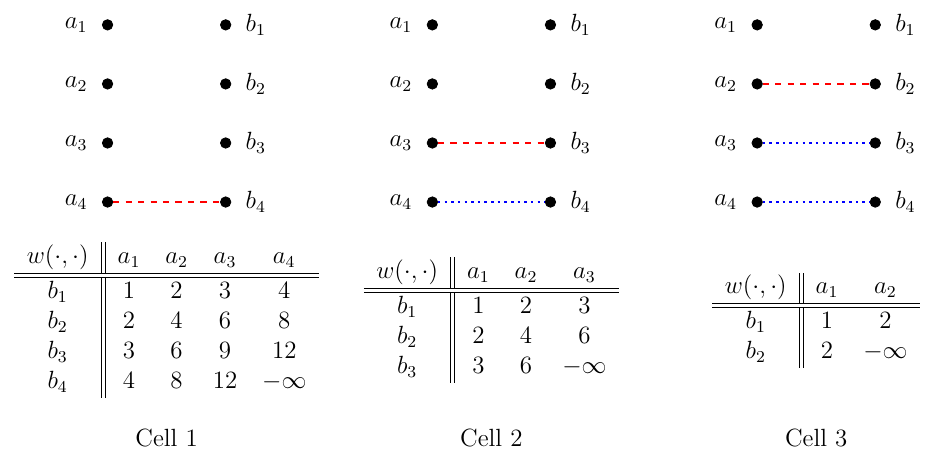}
\caption{The cells obtained after partitioning with respect to Enumeration $2$ for $\bm{c}_0$.}
\label{fig:example_ind2}
\end{figure}
If the weight function of an assignment problem is multiplicative and an ordered enumeration with respect to the weight function of $\mathcal{U}$ is used, the weight function of each cell after the first partitioning in Murty's algorithm is almost multiplicative. Solving the assignment problem with an almost multiplicative weight function is straightforward. One can show that in a maximum weight matching, the edge incident on $a^*$ with the largest weight is included in the matching. After removing the corresponding row and column from the weight function, the problem then reduces to an assignment problem with a multiplicative weight function and can be solved by sorting.

\begin{figure}
\centering
\includegraphics[width=\columnwidth]{./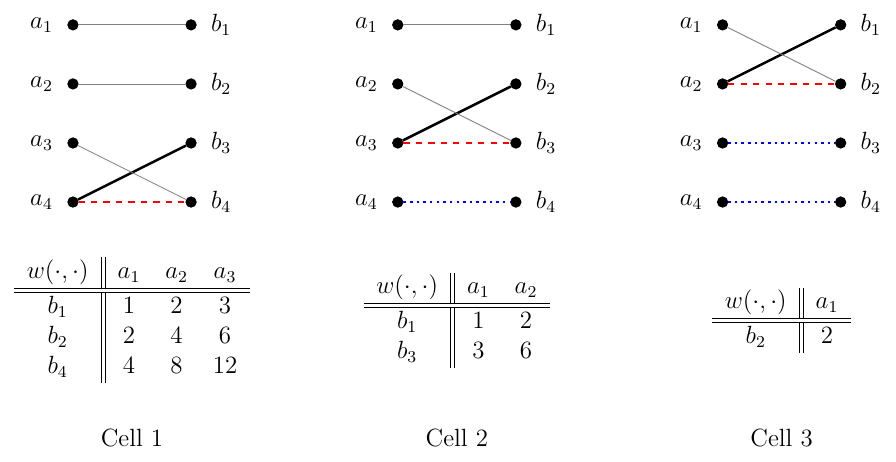}
\caption{The solution of the assignment problem for each cell is obtained after partitioning with respect to Enumeration $2$.}
\label{fig:example_step1}
\end{figure}

\begin{example}
Enumeration $2$ in our running example is ordered with respect to $w_{\mathcal{U}}$. In what follows, we only use this ordered enumeration. In each of the cells, the edge connected to vertex in $\mathcal{U}$ with the largest vertex weight is first selected. This is shown by the thick black edge in Fig.~\ref{fig:example_step1}. The weight function is correspondingly adjusted by removing one row and one column. After choosing the first edge, the rest of the edges in the maximum weight matching are found by sorting and are shown using thin gray lines.
\end{example}

A sorted list of potential candidates $\bm{p}$, sorted by their reward, is formed and all the solutions obtained for each cell are placed on this list. Recall that the solution with the largest reward is denoted $\bm{c}_0$. The solution with the second largest reward, $\bm{c}_1$, is the first item in $\bm{p}$, i.e., the potential candidate in $\bm{p}$ with the largest reward. The algorithm removes this solution $\bm{c}_1$ from $\bm{p}$ and partitions the set of remaining matchings in its cell according to an enumeration of $c_1$. Again, by choosing an ordered enumeration with respect to $w_{\mathcal{U}}$, the reduced problems can be solved by sorting.
\begin{figure}
\centering
\includegraphics[width=\columnwidth]{./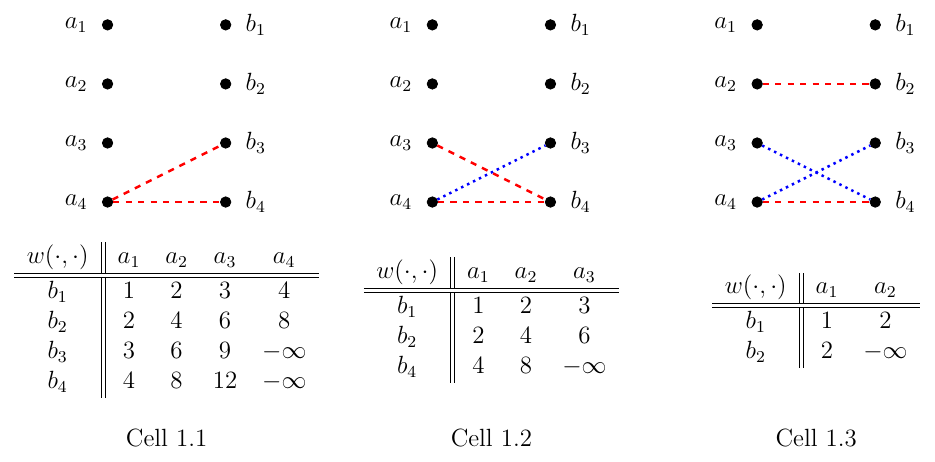}
\caption{The cells obtained after partitioning Cell~$1$ with respect to $\bm{c}_1$.}
\label{fig:example_step21}
\end{figure}
\begin{example}
We insert the solutions for each cell into the sorted list $P$. The reward of the solutions of Cell~$1$, Cell~$2$ and Cell~$3$ in Fig.~\ref{fig:example_step1} are all the same and equal to $29$. Note that in calculating the reward of each solution, the edges coming from the inclusion constraints (the blue dotted edges) are also counted. 
These three matchings are
\begin{align*}
\mathcal{P}_1 &= \{(a_1,b_1), (a_2,b_2), (a_3,b_4), (a_4,b_3)\},\\
\mathcal{P}_2 &= \{(a_1,b_1), (a_2,b_3), (a_3,b_2), (a_4,b_4)\},\\
\mathcal{P}_3 &= \{(a_1,b_2), (a_2,b_1), (a_3,b_3), (a_4,b_4)\}.
\end{align*}
The list $\bm{p}$ is 
\[
\phantom{.}\bm{p} = (\mathcal{P}_1, \mathcal{P}_2, \mathcal{P}_3).
\]
We remove the first matching from this list and set the second solution with the largest reward to be the codeword $\bm{c}_1$ corresponding to the matching $\mathcal{P}_1$.

We then partition the set of remaining matchings in Cell~$1$ by considering the ordered enumeration for the edges in $\bm{c}_1$. The resulting cells, together with their inclusion and exclusion constraints as well as their weight functions are shown in Fig.~\ref{fig:example_step21}. The inclusion and exclusion constraints of Cell~$1$ are carried forward in forming Cell~$1.1$, Cell~$1.2$ and Cell~$1.3$.
\end{example}
\begin{figure}
\centering
\includegraphics[width=\columnwidth]{./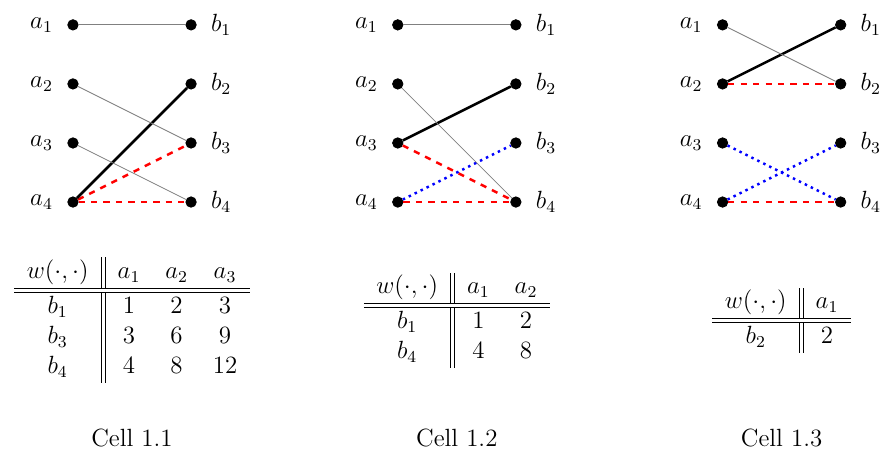}
\caption{The solution of the assignment problem for each cell is obtained after partitioning with respect to the ordered enumeration of $\bm{c}_1$.}
\label{fig:example_step22}
\end{figure}
The algorithm then solves the assignment problem in the newly partitioned cells. The solution of each cell is then added to the list of candidates $\bm{p}$. The next matching with the largest weight $\bm{c}_2$ will then be the first item in the sorted list $\bm{p}$. The algorithm proceeds by partitioning the corresponding cell with respect to the ordered enumeration of $\bm{c}_2$.
\begin{example}
The solutions of the assignment problem in Cell~$1.1$, Cell~$1.2$ and Cell~$1.3$ are shown in Fig.~\ref{fig:example_step22}. Again, the edge connected to vertex in $\mathcal{U}$ with the largest vertex weight is selected first. The weight function is correspondingly adjusted by removing one row and one column. After choosing the first edge, the rest of the edges in the maximum-weight matching are found by sorting (shown using thin gray lines). The solution in Cell~$1.1$ is referred to as $\mathcal{P}_{1.1}$ and has a weight of $27$. The solution in Cell~$1.2$ is called $\mathcal{P}_{1.2}$ and has a weight of $27$. The solution in Cell~$1.3$, $\mathcal{P}_{1.3}$, has a weight of $28$. The sorted list of candidate solutions is updated to
\[
\phantom{.}\bm{p} = (\mathcal{P}_2,\mathcal{P}_3, \mathcal{P}_{1.3}, \mathcal{P}_{1.1}, \mathcal{P}_{1.2}).
\]
The next solution in Murty's algorithm is then the codeword $\bm{c}_2$ corresponding to the matching $\mathcal{P}_2$.
The next step is to partition Cell~$2$ with respect to the ordered enumeration of $\bm{c}_2$. Continuing this, one can see that $\bm{c}_3$ is the codeword corresponding to the matching $\mathcal{P}_3$ and $c_4$ is the codeword corresponding to the matching $\mathcal{P}_{1.3}$.
\end{example}
This specialized Murty's algorithm continues until the $L$ solutions with the largest rewards are found.

\begin{remark}
In practice, we may have some extra constraints on the perfect matchings that disallow some matchings. That is, it is common to have an allowed set of perfect matchings $\mathcal{A}$ that is a proper subset of all possible perfect matchings. If one wishes to find the $L$ solutions to the assignment problem with the largest rewards subject to the extra constraint that the corresponding matchings should be in $\mathcal{A}$, once a candidate solution is chosen from the top of the list $\bm{p}$, the candidate solution should be tested for inclusion in the set of allowed solutions $\mathcal{A}$ and only then be considered as a member in the final list of $L$ solutions. If such a solution is disallowed because of non-membership of $\mathcal{A}$, it still must be kept on the sorted list $\bm{p}$ as the descendant cells may lead to a viable solution included in $\mathcal{A}$. We often expurgate permutation codes to have a subset with a power-of-two size, meaning that not all permutation codewords are allowed. This translates into a restriction on the set of allowed perfect matchings that we can have on the list of most likely codewords.
\end{remark} 
\vspace{-0.5cm}
\section*{Acknowledgment}
The authors wish to thank the reviewers of this paper for their insightful comments. Their constructive suggestions have significantly contributed to the improvement of this work. This work was supported in part by Huawei Technologies, Canada.
\vspace{-0.5cm}

\vskip -2.5\baselineskip plus -1fil
\begin{IEEEbiographynophoto}{Reza Rafie Borujeny}
(Member, IEEE) received the B.Sc.\ degree from the University of
Tehran in 2012, the M.Sc.\ degree from the University
of Alberta in 2014 and the Ph.D.\ degree from the University
of Toronto in 2022 all in electrical and computer engineering.
His research interests include applications of information theory
and coding theory. 
\end{IEEEbiographynophoto}
\vskip -2\baselineskip plus -1fil
\begin{IEEEbiographynophoto}{Susanna~E.~Rumsey}
(Graduate Student Member, IEEE) was born in Toronto, ON, Canada in 1993.  She
received the B.A.Sc. degree with honours in engineering science (major in engineering
physics), and the M.Eng and M.A.Sc.~degrees in electrical and computer
engineering from the University of Toronto, Toronto, ON, Canada, in 2015, 2016,
and 2019 respectively.

Since 2019, she has been a Ph.D. student in electrical and computer engineering
at the University of Toronto, Toronto, ON, Canada.
\end{IEEEbiographynophoto}
\vskip -2\baselineskip plus -1fil
\begin{IEEEbiographynophoto}{Stark C. Draper}
(Senior Member, IEEE) received the B.S.\ degree in Electrical Engineering and 
the B.A.\ degree in History from Stanford University, and the M.S.\ and Ph.D.\ degrees 
in Electrical Engineering and Computer Science from the Massachusetts Institute 
of Technology (MIT). He completed postdocs at the University of Toronto (UofT) 
and at the University of California, Berkeley. He is a Professor in the Department 
of Electrical and Computer Engineering at the University of Toronto and was an 
Associate Professor at the University of Wisconsin, Madison. As a Research Scientist 
he has worked at the Mitsubishi Electric Research Labs (MERL), Disney's Boston 
Research Lab, Arraycomm Inc., the C. S. Draper Laboratory, and Ktaadn Inc. His 
research interests include information theory, optimization, error-correction 
coding, security, and the application of tools and perspectives from these fields 
in communications, computing, learning, and astronomy. He has been the recipient of 
the NSERC Discovery Award, the NSF CAREER Award, the 2010 MERL President's Award, 
and teaching awards from UofT, the University of Wisconsin, and MIT. He received an 
Intel Graduate Fellowship, Stanford's Frederick E. Terman Engineering Scholastic Award, 
and a U.S.\ State Department Fulbright Fellowship. He spent the 2019--2020 academic year 
on sabbatical visiting the Chinese University of Hong Kong, Shenzhen, and the 
Canada-France-Hawaii Telescope (CFHT), Hawaii, USA. Among his service roles, 
he was the founding chair of the Machine Intelligence major at UofT, 
was the Faculty of Applied Science and Engineering (FASE) representative on the UofT 
Governing Council, is the FASE Vice-Dean of Research, and is the President of 
the IEEE Information Theory Society for 2024.
\end{IEEEbiographynophoto}
\vskip -2\baselineskip plus -1fil
\begin{IEEEbiographynophoto}{Frank R. Kschischang}
(Fellow, IEEE) received the B.A.Sc.\ degree
(with honors) from the University of British Columbia, Vancouver, BC,
Canada, in 1985 and the M.A.Sc.\ and Ph.D.\ degrees from the University of
Toronto, Toronto, ON, Canada, in 1988 and 1991, respectively, all in
electrical engineering. Since 1991 he has been a faculty member in
Electrical and Computer Engineering at the University of Toronto, where he
presently holds the title of Distinguished Professor of Digital
Communication.

His research interests are focused primarily on the area of channel coding
techniques, applied to wireline, wireless and optical communication systems
and networks. He has received several awards for teaching and research,
including the 2010 Communications Society and Information Theory Society
Joint Paper Award, and the 2018 IEEE Information Theory Society Paper
Award.  He is a Fellow of IEEE, of the Engineering Institute of Canada, of
the Canadian Academy of Engineering, and of the Royal Society of Canada.

During 1997--2000, he served as an Associate Editor for Coding Theory for
the \textsc{IEEE Transactions on Information Theory}, and from 2014--16, he
served as this journal's Editor-in-Chief.  He served as general co-chair
for the 2008 IEEE International Symposium on Information Theory, and he
served as the 2010 President of the IEEE Information Theory Society. He
received the Society's Aaron D. Wyner Distinguished Service Award in 2016. 
He was awarded the IEEE Richard W.\ Hamming Medal for contributions to the 
theory and practice of error correcting codes and optical communications in 2023.
\end{IEEEbiographynophoto}
\end{document}